\newcommand{\EE}{\mathbb{E} }
\newcommand{\PP}{\mathbb{P} }
\newcommand{\be}{\begin{equation}}
\newcommand{\en}{\end{equation}}
\newtheorem{prop}{Proposition}
\newtheorem{remark}{Remark}
\newcommand{\ea}{\end{eqnarray}}
\newcommand{\ba}{\begin{eqnarray}}
\newcommand{\ean}{\end{eqnarray*}}
\newcommand{\ban}{\begin{eqnarray*}}
\begin{document}

\title{SYSTEMIC RISK AND INTERBANK LENDING}

\author{Li-Hsien Sun\thanks{Institute of Statistics, National Central University, Chung-Li, Taiwan, 32001 {\em lihsiensun@ncu.edu.tw}. Work supported by Most grant 105-2118-M-008-005}}
 
\date{\today}
\maketitle

\bigskip

\begin{abstract}
%We propose a simple model of the banking system incorporating a game feature where the evolution of monetary reserve is  modeled as a system of coupled Feller diffusions. The optimization subject to the quadratic cost  reflects the desire of each bank to borrow from or lend to a central bank through manipulating its lending preference and  the intention of each bank to deposit in the central bank in order to control the volatility for cost minimization. We observe that the Markov Nash equilibrium generated by Cox-Ingersoll-Ross type processes creates  liquidity and deposit rate. The adding liquidity leads to a flocking effect implying stability or systemic risk depending on the level of the growth rate but the deposit rate diminishes the growth of the total monetary reserve causing a large number of bank defaults.
%The central bank acts as a central deposit corporation. In addition, 
%the corresponding Mean Field Game in the case of the number of banks $N$ large and 
%the infinite time horizon stochastic game with the discount factor are also discussed. 

%Finally, we solve for the closed-loop equilibria in the case of inter-bank lending and borrowing with clearing debt obligations using the stochastic game with delay. 

We propose a simple model of the banking system incorporating a game feature where the evolution of monetary reserve is modeled as a system of coupled Feller diffusions. The Markov Nash equilibrium generated through minimizing the linear quadratic cost subject to Cox-Ingersoll-Ross type processes creates liquidity and deposit rate. The adding liquidity leads to a flocking effect but the deposit rate diminishes the growth rate of the total monetary reserve causing a large number of bank defaults. In addition, the corresponding Mean Field Game and the infinite time horizon stochastic game with the discount factor are also discussed. 

\end{abstract}

\textbf{Keywords:} Feller diffusion, systemic risk, inter-bank borrowing and lending system, stochastic game, Nash equilibrium, Mean Field Game.

\section{Introduction}\label{Intro}
The recent financial crisis motivates the research on systemic risk. Central to this research is constructing a safe banking system that alerts investors to possible systemic risk and tackle a financial crisis should it occur. However, due to the complexity of connection between banks, systemic risk is often unpredictable and unmeasurable. Hence, in order to better understand the banking system and systemic risk, we propose a  model of interbank lending and borrowing with a game feature where the evolution of monetary reserve is driven by a system of interacting Feller diffusions  and the default is simply defined as the zero monetary reserve as studied by \cite{Fouque-Ichiba}. Each bank controls its own rate of lending to or borrowing from the central bank to minimize its cost or maximize its profit. The spirit of optimal strategy with respect to the object function is that each bank intends to borrow when its monetary value falls below a critical level (the averaged capitalization)  and to lend when its monetary value remains above the critical level. Intuitively, banks intend to maximize their borrowing or lending activities during a fixed period. Hence,  referring to \cite{R.Carmona2013}, we restrict  the monetary amount that banks borrow from or lend to a central bank at the fixed incentive rate determined by individual banks or a regulator. To this end, we apply a quadratic cost function. The exact Markov Nash equilibrium for a linear quadratic regulator in the both finite and infinite time horizon and subject to Cox-Ingersoll-Ross (CIR) type processes is obtained by a system of Hamilton-Jacobi-Bellman equations. In addition, we solve for $\epsilon$-Nash equilibrium in the mean field limit using the coupled backward Hamilton-Jacobi-Bellman (HJB) equation and the forward Kolmogorov equation for the value function and distribution of dynamics. Given the feature of CIR type processes, the admissibility of the equilibrium is worth to be discussed. From a financial perspective, compared to case of the Ornstein-Uhlenbeck (OU) type processes proposed in \cite{R.Carmona2013}, due to the volatility driven by the squared root function of states, the 
Markov Nash equilibrium generated by CIR type processes creates not only liquidity but deposit rate impairing the growth rate leading to the system instability such that a regulator must govern the growth rate by controlling the corresponding lending/borrowing incentive to prevent system crash.  

Such interactions leading to multiple defaults is investigated by other mathematical models in continuous time.    \cite{Fouque-Sun, R.Carmona2013} comment on systemic risk using the interacting OU type processes. 
%\cite{Carmona-Fouque2016} consider the corresponding obligation of the lending and borrowing through delayed controls in the same type of processes. 
\cite{Garnier-Mean-Field, GarnierPapanicolaouYang} study the rare events in the potential bistable model based on large deviation results.
%In particular, the stability created by a central agent according to the model in \cite{GarnierPapanicolaouYang, Garnier-Mean-Field} is provided by \cite{Papanicolaou-Yang2015}. 
Mean Field Games (MFGs) proposed by \cite{MFG1, MFG2, MFG3} are applied frequently to obtain the asymptotic equilibria called $\epsilon$-Nash equilibria of stochastic differential games with interactions given by empirical distributions whose solution satisfies the strongly coupled  HJB equations backward in time and the Kolmogorov equations forward in time. Independently, Caines, Huang, and Malham\'{e} develop a similar scenario called the Nash Certainty equivalence. See \cite{HuangCainesMalhame1,HuangCainesMalhame2} for instance. Instead of the partial differential equation (PDE) approach, \cite{Bensoussan_et_al, CarmonaDelarueLachapelle, Carmona-Lacker2015} consider the probabilistic approach using appropriate forms of adjoint forward-backward stochastic differential equations (FBSDEs) to solve $\epsilon$-Nash equilibria satisfying the Pontryagin stochastic minimum principle. In particular, \cite{Lacker-Webster2014, Lacker2015} study MFGs with common noise. Due to the volatility in Feller diffusions driven by the square-root function of states, the sufficient Pontryagin minimum principle is not straightforward to be applied. Hence, we tackle the optimization problem using the HJB approach. 

Consider $N$ banks that lend to and borrow from each other. Their monetary reserves are represented by a system of diffusion processes $X^i_t$ for $i=1,\cdots,N$ driven by $N$ independent standard Brownian motions $W^i_t$, $i=1,\cdots,N$. The initial values of the system may be nonnegative squared integrable and identically distributed random variables $X^i_0=\xi^i$ for $i=1,\cdots,N$ independent of the Brownian motions. The model is written as
\be
d X^{i}_t = \left(\frac{a}{N}\sum_{j=1}^N(  X^j_t- X^i_t)+\gamma_t\right)dt+\alpha^i_t dt
 +2\sqrt{X^i_t}d  {W}^{i}_t\,, \quad  i=1,\cdots, N,
\en
where the nonnegative growth rate or income rate  $\gamma_t $ for the individual bank is a deterministic function in $L^{\infty}$ space and the lending preference, or rate of lending and borrowing is normalized by the number of banks with $0\leq a\leq N$. Referring to \cite{Fouque-Ichiba}, we further assume the diffusion parameter equal to $2$ identical to squared Bessel processes in order to simplify the analysis of system stability. 
%It is easy to extend to the case with general parameter $\sigma$.   

The drift terms represent interactions between the banks; specifically they define the borrowing or lending rate of bank $i$ from or to bank $j$. The simple lending and borrowing can be rewritten in the mean field form as  
\be
dX^i_t=\left(a(\overline X_t-X^i_t)+\gamma_t+\alpha^i_t\right) dt +2\sqrt{X^i_t}d  {W}^{i}_t\,, \quad  i=1,\cdots, N,
\en
where  $\overline X_t=\frac{1}{N}\sum_{j=1}^NX^j_t$ represents the averaged capitalization whereby bank $i\in\{1,\cdots,N\}$ chooses its own optimization strategy $\alpha^i$ for optimizing its lending and borrowing rates to/from the central bank at initial time by minimizing the quadratic cost   
\ba\label{objective-intro}
J^i(\alpha^1,\cdots,\alpha^N)=\EE\bigg\{\int_{0}^{T} f_i(X_t,\alpha^i_t)
dt+g_i(X_{T}^{i})\bigg\},
\ea
where $X=(X^1,\cdots,X^N)$, $x=(x^1,\cdots,x^N)$, $\alpha=(\alpha^1,\cdots,\alpha^N)$. Note that $\alpha_\cdot$ is a progressively measurable control process and  $\alpha^i_\cdot$ is admissible if for $0\leq t\leq T$,
\be\label{admissible-cond}
X^{j,\alpha^i}_t\geq 0,\; \forall j\neq i\; {\mathrm{and}}\;X^{i,\alpha^i}_t\geq 0\;a.s..
\en  
The running cost and the terminal cost are defined as  
\[
f_i(X_t,\alpha^i_t)
=\frac{(\alpha^i_t)^2}{2}-q\alpha^i_t(\overline{X}_t-X^i_t)
+\frac{\epsilon}{2}(\overline{X}_t-X^i_t)^2,
%+\frac{(\alpha^i_{t-\delta})^2}{2}
%+\frac{\tilde\epsilon}{2}(\overline{X}_{t-\delta}-X^i_{t-\delta})^2
\]
and  
\[
g_i(X_T)=\frac{c}{2}\left(\overline{X}_T-X^{i}_T\right)^2,
\]
where $q$, $c$, and $\epsilon$ are positive parameters and $q^2\leq\epsilon$ in order to satisfy the convexity condition of the running cost $f_i(x,\alpha)$. Note that $q$ controls the borrowing and lending incentive and the case of $X^i_t\leq \overline X_t$ gives $\alpha^i_t\geq 0$ denoting that bank $i$ intends to borrow. Conversely, when $X^i_t > \overline X_t$, the negative strategy $\alpha^i_t<0$ implies a lending intention by bank $i$.  In addition, $\epsilon$ and $c$ are treated as the parameters for the running penalty and the terminal penalty respectively implying that banks intends to hold their own reserves close to the average capitalization. 

%In particular, $q^2\leq\epsilon$ leads to the convexity condition of the running cost $f_i(x,\alpha)$. 

The remainder of this paper is organized as follows. Section \ref{Interbank} discusses the simple uncontrolled constant growth rate of interbank lending and borrowing. The behavior of the system is studied using the results of \cite{Fouque-Ichiba}. Section \ref{Exact-Nash} is devoted to solving the Markov Nash equilibria of the stochastic differential games and the $\epsilon$-Nash equilibria in the mean field limit. 
We implement the coupled Hamilton-Jacobi-Bellman equations in Markovian models. The financial implication is illustrated at the end of this section. The infinite time horizon with the discount factor and its influence on the system crash and multiple defaults are analyzed in Section \ref{Inf-Nash}. The conclusion is provided in Section \ref{conclusion} and the verification theorems for both finite and infinite time horizons are given in the appendix.    

%In Section \ref{Delay}, we consider the model of lending and borrowing with the corresponding clearing debt obligations and search for the closed-loop equilibrium using  the dynamic programming-caratheodory approach. 

\section{Interbank Lending System}\label{Interbank}
In this section, we discuss the simple model of lending and borrowing without controls. In this model, bank $i$ has a monetary reserve $X^i_t$ with a constant growth rate $\gamma$ written as 
\be\label{coupled-no-game}
d X^{i}_t = \left(\frac{a}{N}\sum_{j=1}^N(  X^j_t- X^i_t)+\gamma\right)dt+2\sqrt{X^i_t}d  {W}^{i}_t\,, \quad  i=1,\cdots, N,
\en
 where the lending preference is a fixed normalized constant $\frac{a}{N}\leq 1$ treated as one particular case in \cite{Fouque-Ichiba}. Consequently, the total monetary reserve $Y_t=\sum_{i=1}^NX^i_t$ is given by
\be\label{Y-no-game}
dY_t=N\gamma dt+2\sqrt{  Y_t}d\widetilde W_t
\en
where $\widetilde W_t$ is a standard Brownian motion in some extension probability space by L\'evy theorem referring to Theorem 3.4.2 of \cite{Karatzas2000}. We straightforwardly apply the results of \cite{Fouque-Ichiba} (See also Chapter 6 of \cite{Yor2013} for instance) as follows:
\begin{itemize}
\item System crash
\begin{enumerate}
\item If $\gamma>\frac{2}{N}$, systemic risk never happens owing to the total monetary reserve $Y$ never reaching zero:
\be\label{cond-1}
\PP(Y_t>0\;\mathrm{for\;all\;} t\in[0,\infty))=1.
\en
\item If $\gamma=\frac{2}{N}$, we obtain 
\be\label{cond-2}
\PP(\limsup_{t\rightarrow\infty}Y_t=\infty)=1,
\en
and the system survivals since the total reserve $X$ remains positive . However, due to
\be\label{cond-3}
\PP(\inf_{0\leq t\leq\infty}Y_t=0)=1,
\en 
the shrinking of the total reserve to almost nothing  leads to a financial crisis at some point in the future almost surely. 
\item If $0<\gamma<\frac{2}{N}$, the total monetary reserve $Y$ violates the property \eqref{cond-2} but satisfies property \eqref{cond-3}. Under this condition, the total monetary reserve $Y$ reaches zero almost surely at some large time and reflects instantaneously at that point. Therefore, all banks face defaults in the future
\item If $\gamma=0$,  the total monetary reserve $Y$ approaches to zero within some finite time. At this time, all banks default at some time and thereafter remain at zero almost surely.
\end{enumerate}
%\item If the constant normalized rate $\frac{a}{N}>0$, the multiple defaults condition written as 
%\be\label{crash-condition}
%\sup_{x\in [0,\infty)^N}|x^{h_i}-x^j|<\frac{N}{ak(N-1)}\left(2-k\gamma\right),
%\en 
%for some $k\in\{1,\cdots,N\}$ indexes $(h_1,\cdots,h_k)\subset\{1,\cdots,N\}$ is not satisfied in general.  
%
%\item Regarding the individual default, according to Proposition 2.3 in \cite{Fouque-Ichiba}, the condition  
%\[
%\inf_{x\in[0,\infty)} a(\overline x-x^i)\geq \frac{2-\gamma}{N-1}
%\]
%holds for an individual bank $i$  leading to the drift of $X^i$  than $2$. Hence, bank $i$ is never broke.  

\item Given a identical growth rate $\gamma$ and a symmetric lending preference $a/N$, the stochastic stability of the process  $X^i_t-\overline X_t$ for all $t\geq 0$ is guaranteed, Under this condition, the matrix-valued process $(X^i_t-X^j_t)_{1\leq i,j\leq N}$ is also stochastically stable using Proposition 2.4 and Corollary 2.1 in \cite{Fouque-Ichiba}.     
\end{itemize}

\begin{figure}
\includegraphics[width=15cm,height=6cm]{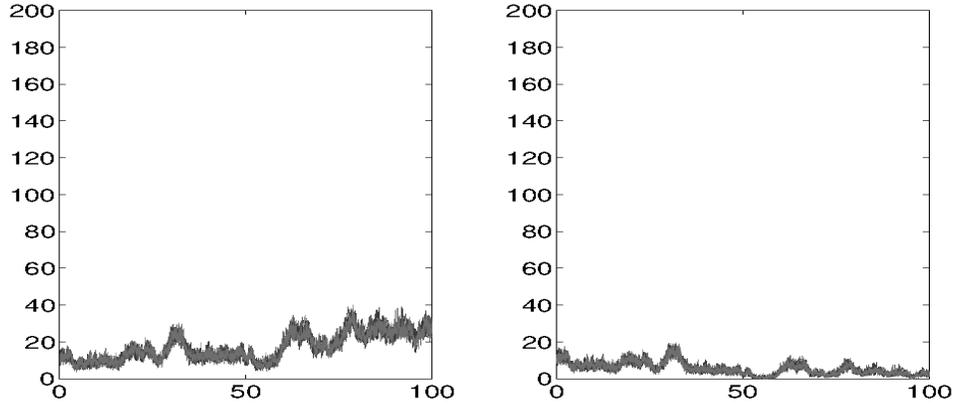}
\caption{One realization of $N=10$ trajectories of \eqref{coupled-no-game} with $\gamma=0.2$ (left plot) and with $\gamma=0$ (right plot). The common mean-reverting rate $a$ is fixed at $10$.}
\label{realization-02}
\end{figure}

\begin{figure}
\includegraphics[width=15cm,height=6cm]{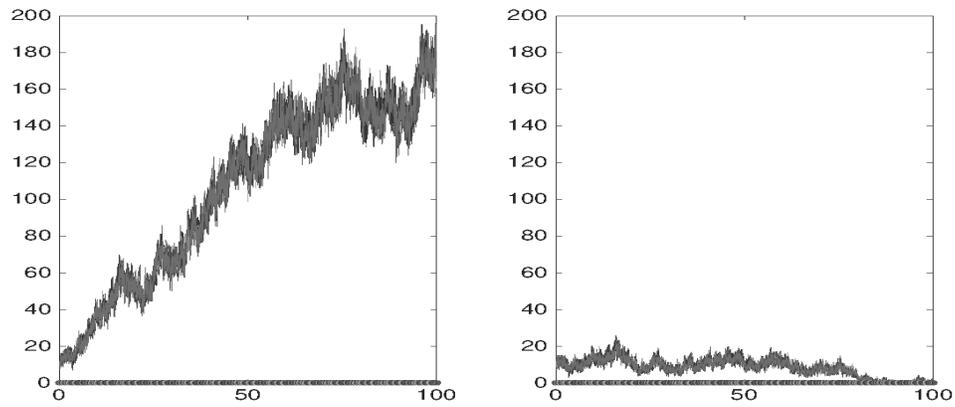}
\caption{One realization of $N=10$ trajectories of \eqref{coupled-no-game} with $\gamma=2$ (left plot) and with $\gamma=0$ (right plot). The common mean-reverting rate $a$ is fixed at $10$.}
\label{realization-2}
\end{figure}

%In particular, if the growth rate $\gamma$ is large enough, we observe that not only system crash never occurs but also an individual bank never defaults. 

%\begin{prop} 
%In the case of growth rate $\gamma>2$, an individual bank is never broke. 
%\end{prop}
%\begin{proof}
%Applying the comparison theorem in \cite{Ikeda-Watanabe1977} with the squared Bessel processes and using $\frac{1}{N}Y_t\leq X^i_t$, we obtain that the process
%\[
%X^i_t=\xi^i+\int_0^t\left(a(\frac{1}{N}Y_s-X^i_s)+\gamma\right)ds+2\int_0^t\sqrt{X^i_s}dW^i_s,
%\]
%  is less than or equal to the squared Bessel process given by
%\ban
%\check X^i_t=\xi^i+\int_0^t\left(a(\frac{1}{N}X^i_s-X^i_s)+\gamma\right)ds+2\int_0^t\sqrt{\check X^i_s}dW^i_s,
%\ean
%  

%\end{proof}

 Here, we apply the Euler scheme to plot a typical realization of $N=10$ trajectories of (\ref{coupled-no-game}) up to time $T=100$ with a time step $\Delta=10^{-4}$ and $a$ fixed at $10$. The growth rate $\gamma$ is varied as $0.2$ and $0$ in Figure \ref{realization-02} and as $2$ and $0$ in Figure \ref{realization-2}. Note that the trajectories are grouped by the mean reverting rate $a=10$. This flocking behavior leads to stability at the larger growth rate creates stability but systemic risk at the smaller growth rate. Figure \ref{realization-02} shows that due to $\gamma=0.2$, the growth rate of total monetary reserve $Y_t$ is $2$ satisfying the condition (\ref{cond-3}) so that we observe the shrinking scenario in the sense that $Y_t$ becomes small in this particular realization. The monetary reserve $X^i_t$ can be rewritten as 
\ban
d X^{i}_t &=& \left(\frac{a}{N}\sum_{j=1}^N(  X^j_t- X^i_t)+\gamma\right)dt+2\sqrt{X^i_t}d  {W}^{i}_t\\
&=&\bigg(a(\overline{X}_t-X^i_t)+\gamma\bigg)+2\sqrt{X^i_t}d  {W}^{i}_t\,, \quad  i=1,\cdots, N,
\ean
where $X^i_t$ is mean-reverting at the averaged capitalization $\frac{1}{N}Y_t$. The previous discussion implies that lending and borrowing behaviors create stability when $\gamma>\frac{2}{N}$ but incur systemic risk when $0\leq\gamma\leq \frac{2}{N}$. 

%Hence, according to the above discussion, we observe that the growth rate acts as stabilizer for the system. A regulator must pay attention the behavior of the growth rate to prevent system crash. 

In addition, similarly to the model in \cite{R.Carmona2013}, the number of banks $N$ plays an important role in the system. In coupled OU processes, a large $N$ mitigates systemic risk since the probability of the leading averaged capitalization reaching the default barrier (i.e. entering a systemic event) decays with order $\sqrt{N}$. The systemic event triggering a system crash occurs independently of the mean-reverting rate $a$. On the other hand,  in interacting Feller diffusions, the system crash occurs through the total monetary reserve of the interacting Feller diffusions driven by $N\gamma$. The probability of all defaults diminishes meaning that the survival probability of the system increases under a strong growth rate $\gamma$ and a large $N$. Note that this probability is independent of the lending preference $a/N$.  
%However, the lending preference still affects the facility of the avoidance of multiple defaults. Given the fixed growth rate $\gamma$, the conditions \eqref{crash-condition} and \eqref{survival-condition} show that a system with the high frequent lending and borrowing given by a large rate $a$ has slight possibility to face multiple defaults. 
Hence, in the case of the coupled system with an identical lending preference, for the propose of preventing systemic risk or multiple defaults, a regulator must expend the number of individuals and encourage lending and borrowing behavior.

\section{Construction of Nash Equilibria}\label{Exact-Nash}
We now return to the model with a game feature proposed in Section \ref{Intro}. The monetary reserve of bank $i$ is written as
\be\label{coupled}
d X^{i}_t = \left(a(\overline X_t-X^i_t)+\gamma_t\right)dt+\alpha^i_t dt
 +2\sqrt{X^i_t}d  {W}^{i}_t\,, \quad  i=1,\cdots, N,
\en
 where the growth rate nonnegative  $\gamma_t $  is a deterministic function in $L_{\infty}$ space and the initial value $X^i_0=\xi^i$ may be a nonnegative square-integrable and identically distributed random variable independent of Brownian motions. Each bank attempts to minimize 
\ba\label{objective}
\EE\bigg\{\int_{0}^{T} f_i(X_t,\alpha^i_t)
dt+g_i(X_{T}^{i})\bigg\},
\ea
where $X=(X^1,\cdots,X^N)$, $x=(x^1,\cdots,x^N)$, $\alpha=(\alpha^1,\cdots,\alpha^N)$ and 
\[
f_i(X_t,\alpha^i_t)
=\frac{(\alpha^i_t)^2}{2}-q\alpha^i_t(\overline{X}_t-X^i_t)
+\frac{\epsilon}{2}(\overline{X}_t-X^i_t)^2,
%+\frac{(\alpha^i_{t-\delta})^2}{2}
%+\frac{\tilde\epsilon}{2}(\overline{X}_{t-\delta}-X^i_{t-\delta})^2
\]
and $q^2\leq\epsilon$ ensuring that $f_i(x,\alpha)$ is convex in $(x,\alpha)$ 
and
\[
g_i(X_T)=\frac{c}{2}\left(\overline{X}_T-X^{i}_T\right)^2.
\]

%\subsection{Markov Nash Equilibria in Finite Players Games}
\subsection{HJB Approach}\label{HJB-approach}
Referring to the discussion in Chapter 5 of \cite{CarmonaSIAM2016}, in the Markovian setting, the Markov Nash equilibrium can be explicitly solved via the HJB equations. As banks intend to minimize their costs, given the optimal strategies $\hat\alpha^j$ for $j\neq i$ with the corresponding trajectories $\hat X^{-i}=(\hat X^1,\cdots,\hat X^{i-1},\hat X^{i+1},\cdots,\hat X^N)$, the value function of bank $i$ is written as
\be\label{value-function}
V^i(t,x)=\inf_{\alpha^i\in{\cal{A}}^i}\EE_{t,x}\left\{\int_t^T f_i(X^i_s, \hat X^{-i}_s, \alpha^i_s)ds+g_i(X_{T}^{i},\hat X^{-i}_T)\right\}.
\en
Hence, the HJB equation for solving for the Markov Nash equilibrium suggested by the dynamic programming principle reads
%\ba\label{HJB}
%\nonumber -\gamma V^i&+&\inf_{\alpha^i}\bigg\{
%\sum_{j\neq i}\bigg(a(\overline{x}-x^j)+{\alpha^j(t,x)}\bigg)\partial_{x^j}V^i
%+ \bigg(a(\overline{x}-x^i)+{\alpha^i(t,x)}\bigg)\partial_{x^i}V^i\\
%&+&+\frac{1}{2}\sum_{j=1}^N   x^j \partial_{x^jx^j}V^i\frac{(\alpha^{i})^2}{2}-q\alpha^{i}\left(\overline{x}-x^i\right)+\frac{\epsilon}{2}(\o x-x^i)^2\bigg\} =0,
%\ea
\ba\label{HJB}
\nonumber  \partial_{t}V^i&+&\inf_{\alpha^i}\bigg\{
\sum_{j\neq i}\bigg(a(\overline{x}-x^j)+\gamma_t+{\hat\alpha^j(t,x)}\bigg)\partial_{x^j}V^i
+ \bigg(a(\overline{x}-x^i)+\gamma_t+{\alpha^i}\bigg)\partial_{x^i}V^i\\
&+&2 \sum_{j=1}^N   x^j \partial_{x^jx^j}V^i+\frac{(\alpha^{i})^2}{2}-q\alpha^{i}\left(\overline{x}-x^i\right)+\frac{\epsilon}{2}(\overline x-x^i)^2\bigg\} =0,
\ea
with the terminal condition $V^i(T,x)=\frac{c}{2}(\overline x -x^i)^2$. Here, $\hat\alpha^j$ for all $j\neq i$ denote the optimal strategies of bank $j$ for all $j\neq i$.   We minimize (\ref{HJB}) with respect to $\alpha^i$ leading to the resulting strategy for player $i$  
\ba\label{optimal-closed} 
\hat\alpha^i=q^i(\overline x-x^i)-\partial_{x^i}V^i.
\ea
Inserting the optimal strategy candidate \eqref{optimal-closed} into the HJB equation \eqref{HJB}, \eqref{HJB} becomes 
\ba
\nonumber \partial_tV^i&+&\sum_{j=1}^N\bigg((a+q)(\overline{x}-x^j)-\partial_{x^j}V^j+\gamma_t\bigg)\partial_{x^j}V^i+2\sum_{j=1}^Nx^j \partial_{x^jx^j}V^i\\
&+&\frac{(\partial_{x^i}V^i)^2}{2}+\frac{(\epsilon-q^2)}{2}(\overline{x}-x^i)^2=0.
\ea
We now make the following ansatz for $V^i$ 
\be\label{ansatz-Vi}
V^i(t,x)=\frac{\eta^c_t}{2}(\overline{x}-x^i)^2+L^c_t(\overline{x}-x^i)+\phi^c_t\overline x+\mu^c_t,
\en
where $\eta^c_t$, $L^c_t$, $\phi^c_t$ and $\mu^c_t$ are deterministic functions. Inserting \eqref{ansatz-Vi} into \eqref{HJB} and identifying the terms $(\overline{x}-x^i)^2$, $(\overline{x}-x^i)$, $\overline x$, and the state-independent terms, we obtain a system of ordinary differential equations written as
\ban
 \dot\eta^c_t&=&2(a+q)\eta^c_t+(1-\frac{1}{N^2})(\eta^c_t)^2-(\epsilon-q^2),\\
 \dot L^c_t&=&\left(a+q+ \frac{1}{N}(1-\frac{1}{N})\eta^c_t\right)L^c_t+\frac{1}{N}(1-\frac{1}{N})\eta^c_t\phi^c_t+2(1-\frac{2}{N})\eta^c_t,\\
 \dot\phi^c_t&=&-2(1-\frac{1}{N})\eta^c_t,\\
 \dot\mu^c_t&=&\frac{1}{N}(1-\frac{1}{2N})(\phi^c_t)^2-\frac{1}{2}(1-\frac{1}{N})^2(L^c_t)^2-(1-\frac{1}{N})^2L^c_t\phi^c_t-\gamma_t\phi^c_t,
\ean
with terminal conditions $\eta^c_T=c$, $L^c_T=0$, $\phi^c_T=0$, and $\mu^c_T=0$. Solving this system, we get 
\be\label{eta-explicit}
\eta^c_{t}  = \frac{-(\epsilon-q^2)\left(e^{(\delta^+-\delta^-)(T-t)}-1\right)-c\left(\delta^+ e^{(\delta^+-\delta^-)(T-t)}-\delta^-\right)}
{\left(\delta^-e^{(\delta^+-\delta^-)(T-t)}-\delta^+\right)-c(1-\frac{1}{N^2})\left(e^{(\delta^+-\delta^-)(T-t)}-1\right)},
\en
where we use the notation 
\be\label{deltas}
\delta^{\pm}=-(a+q)\pm\sqrt{R},
\en
with
\be\label{eq:R}
R:=(a+q)^{2}+\left(1-\frac{1}{N^2}\right)(\epsilon-q^{2})>0,
\en
\be\label{L}
L^c_t=\int_t^Te^{\int_s^t\left(a+q+ \frac{1}{N}(1-\frac{1}{N})\eta^c_u\right)du}\left(\frac{1}{N}(\frac{1}{N}-1)\eta^c_s\phi^c_s+2(\frac{2}{N}-1)\eta^c_s\right)ds,
\en
\be\label{phi}
\phi^c_t=\int_t^T2(1-\frac{1}{N})\eta^c_sds,
\en
and
\be
\mu_t^c=-\int_t^T\left\{\frac{1}{N}(1-\frac{1}{2N})(\phi^c_s)^2-\frac{1}{2}(1-\frac{1}{N})^2(L_s^c)^2- (1-\frac{1}{N})^2L^c_s\phi^c_s-\gamma_s\phi^c_s\right\}ds.
\en
Observe that $\eta_t$ is well defined for any $t\leq T$  given by
the negative denominator of (\ref{eta-explicit})  
\ban
&&-\left(e^{(\delta^+-\delta^-)(T-t)}+1\right)\sqrt{R}
-\left(a+q+c\left(1-\frac{1}{N^2}\right)\right)\left(e^{(\delta^+-\delta^-)(T-t)}-1\right)<0,
\ean
owing to $\delta^+-\delta^-=2\sqrt{R}>0$ studied in \cite{R.Carmona2013}. Therefore, using (\ref{ansatz-Vi}), the corresponding optimal strategy for bank $i$ is written as  
\begin{equation}\label{optimal-finite-ansatz}
\hat\alpha^i_t=\left(q+(1-\frac{1}{N})\eta^c_t\right)(\overline X_t-X^i_t)-\psi^c_t,
\end{equation}
which is also the Markov Nash equilibrium. We write the corresponding dynamic $X^i_t$ for $i=1,\cdots,N$ as
\be\label{X-optimal}
d X^i_t=\bigg\{\left(a+q+(1-\frac{1}{N})\eta^c_t\right)(\overline {  X}_t- X^i_t)+\gamma_t-\psi^c_t\bigg\}dt +2 \sqrt{  X^i_t}dW^i_t, 
\en
where $\psi^c_t=(\frac{1}{N}-1)L^c_t+\frac{1}{N}\phi^c_t\geq 0$ expresses the deposit rate. Consequently, the total monetary reserve $ Y_t=\sum_{i=1}^N X^i_t$  is obtained as 
\be\label{sum}
dY_t=N(\gamma_t-\psi^c_t) dt+2\sqrt{  Y_t}d\widetilde W_t,
\en
with a standard Brownian motion $\widetilde W_t$ in some extension probability space.
In order to satisfy the regularity condition of $\eqref{sum}$ and guarantee the admissibility of $\hat\alpha^i$ for $i=1,\cdots,N$, we further assume 
\be\label{regularity-cond-1}
\gamma_t\geq \psi^c_t
\en
and 
\be\label{regularity-cond-2}
a+q+(1-\frac{1}{N})\eta_t^c\leq N,
\en
for $0\leq t\leq T$ leading to the drift coefficient of $X$ bounded below by $\gamma_t-\psi^c_t\geq 0$ for $0\leq t\leq T$. Based on Proposition 2.1 in \cite{Fouque-Ichiba}, the existence and uniqueness of the weak solution to $ X^i$ for $i=1,\cdots,N$ given by (\ref{X-optimal}) in the positive polyhedral domain $[0,\infty)^N$ can be obtained. Hence, we verify the admissibility of the Markov Nash equilibrium $\hat\alpha^i$ for $i=1,\cdots,N$.   

The financial interpretation of the assumption (\ref{regularity-cond-1}) is that banks are forbidden from leaving more deposits than their monetary growth through an appropriate incentive $q$ imposed by a regulator. The condition (\ref{regularity-cond-2}) mentions that the number of banks in the interbank lending system is large enough. In order to verify that $V^i$ given by (\ref{ansatz-Vi}) is the value function and the corresponding optimal strategy is $\hat\alpha^i$ written as ({\ref{optimal-finite-ansatz}}), we study the verification theorem given by Theorem \ref{Ver-Thm} in Appendix \ref{Appex-1}.

Below, we study the $\epsilon$-Nash equilibria and defer the financial implications to Section \ref{F-I}.

%{\color{blue}
\subsection{$\epsilon$-Nash Equilibria: Mean Field Games}
In MFGs, we solve the $\epsilon$-Nash equilibria in the mean field limit $N\rightarrow\infty$. According to the steps in \cite{R.Carmona2013}, we first fix $(m_t)_{t\geq 0}$ as a candidate for the limit $\overline X_t$ as $N\rightarrow\infty$  
\be\label{limit-xbar}
m_t=\lim_{N\rightarrow\infty}\overline X_t,
\en
and solve the one-player standard control problem 
\be\label{value-MFG}
V^m(t,x)=\inf_{\alpha_t}\EE_{t,x}\bigg\{\int_0^T\left[\frac{\alpha^2_t}{2}-q\alpha_t(m_t-X_t)+\frac{\epsilon}{2}(m_t-X_t)^2\right]dt+\frac{c}{2}(m_T-X_T)^2\bigg\},
\en
subject to 
\be\label{X0-MFG}
dX_t= \bigg(a(m_t-X_t)+\gamma_t\bigg)dt+\alpha_tdt+2\sqrt{X_t}dW_t,
\en
with the nonnegative  deterministic growth rate  $\gamma_t $ in $L_{\infty}$ space where $W_t$ is a standard Brownian motion independent of the initial value $X_0=\xi$ which may be a nonnegative square integrable random variable. Finally, we obtain $m_t$ such that $m_t=\EE[X_t]$ for all $0\leq t\leq T$. Given the candidate $m_t$,  the corresponding HJB equation to the problem (\ref{value-MFG}-\ref{X0-MFG}) is written as 
\be\label{HJB-MFG}
\partial_tV^m+\inf_{\alpha}\bigg\{\left(a(m_t-x)+\gamma_t+\alpha\right)\partial_xV^m+2x\partial_{xx}V^m+\frac{\alpha^2}{2}-q\alpha(m_t-x)+\frac{\epsilon}{2}(m_t-x)^2\bigg\}=0.
\en
The first order condition leads to optimal strategy given by
\be\label{optimal-MFG}
\hat\alpha_t=q(m_t-X_t)-\partial_xV^m. 
\en
Plugging (\ref{optimal-MFG}) into (\ref{HJB-MFG}) implying
\be
\partial_tV^m+\left((a+q)(m_t-x)-\partial_xV^m+\gamma_t\right)\partial_xV^m+2x\partial_{xx}V^m+\frac{(\partial_xV^m)^2}{2} +\frac{\epsilon-q^2}{2}(m_t-x)^2=0,
\en 
and then using the ansatz $V^m(t,x)=\eta^m_t(m_t-x)+L^m_t(m_t-x)+\phi^m_tm_t+\mu^m_t$, we obtain that $\eta^m_t$, $L^m_t$, and $\mu^m_t$ must satisfy
\ban
\label{eta-MFG}\dot\eta^m_t&=&2(a+q)\eta^m_t+(\eta^m_t)^2-(\epsilon-q^2),\\
 \dot L^m_t&=&(a+q)L^m_t+2 \eta^m_t,\\
 \dot\phi^m_t&=&-2 \eta^m_t, \\
\label{mu-MFG}\dot\mu^m_t&=&-\frac{1}{2}(L^m_t)^2-L^m_t\phi^m_t-\gamma_t\phi^m_t,
\ean
with terminal conditions $\eta^m_T=c$, $L^m_T=0$, $\phi^m_T=0$, and $\mu^m_T=0$ leading to 
\be\label{eta-explicit-MFG}
\eta^m_{t}  =\eta_t= \frac{-(\epsilon-q^2)\left(e^{(\delta^+-\delta^-)(T-t)}-1\right)-c\left(\delta^+ e^{(\delta^+-\delta^-)(T-t)}-\delta^-\right)}
{\left(\delta^-e^{(\delta^+-\delta^-)(T-t)}-\delta^+\right)-c\left(e^{(\delta^+-\delta^-)(T-t)}-1\right)},\nonumber
\en
where we again use the notation  
\be\label{deltas-m}
\delta^{\pm}=-(a+q)\pm\sqrt{R},\nonumber
\en
with
\be\label{eq:R-m}
R:=(a+q)^{2}+(\epsilon-q^{2})>0,\nonumber
\en
\be\label{L-m} 
L^m_t=L_t=-2\int_t^Te^{\left(a+q\right)(t-s)}  \eta^m_s ds,\nonumber
\en
\be\label{phi-m}
\phi^m_t=\phi_t 2\int_t^T\eta^m_sds,\nonumber
\en
and
\be
\mu^m_t=\mu_t=\int_t^T\left\{\frac{1}{2}(L^m_s)^2+ L^m_s\phi^m_s+\gamma^m_s\phi^m_s\right\}ds.\nonumber
\en
Hence, the $\epsilon$-Nash equilibrium is given by
 \begin{equation} 
 \hat\alpha_t=(q+\eta_t)(m_t-X_t)-\psi_t,
 \end{equation}
with deposit rate $\psi_t=- L_t\geq 0$ for $0\leq t\leq T$ and the corresponding dynamic $X_t$ is driven by
\[
d X_t=((a+q+\eta_t)(m_t- X_t)-\psi_t+\gamma_t)dt+2\sqrt{  X_t}dW_t.
\]
We obtain that $m_t=\EE(X_t)$ for $0\leq t\leq T$ written as
\[
dm_t=(\gamma_t-\psi_t)dt. 
\] 
where we assume  $\gamma_t-\psi_t\geq 0$ for all $0\leq t\leq T$. The $\epsilon$-Nash equilibrium \eqref{optimal-MFG} is the exact limit of the Markov equilibrium in the finite player games \eqref{optimal-finite-ansatz} as $N\rightarrow\infty$. 
Hence, we naturally regard the $\epsilon$-Nash equilibrium as an asymptotic solution of the optimal strategy for bank $i$ written as 
\[
\hat\alpha^i_t=(q+\eta_t)(\overline X_t-X^i_t)-\psi_t,
\] 
satisfying $a+q+\eta_t\leq N$ for all $0\leq t\leq T$ in order to guarantee admissibility referring to Proposition 2.1 in \cite{Fouque-Ichiba}. The mean field limit $m_t$ simply is replaced from the averaged capitalization $\overline{X}_t$. 
%}

\begin{remark}
The Markov Nash equilibrium and the $\epsilon$-Nash equilibrium are also obtained using the Dynamic programming-Carath\'eodory approach. In fact, since   the HJB approach gives the equilibrium explicitly in Section \ref{HJB-approach}, the Dynamic programming-Carath\'eodory approach is treated as a verification theorem. However, in the non-Markovian case, we apply this approach to solve the equilibrium. See \cite{alekal1971quadratic}, \cite{L.Chen2011}, \cite{huang2012forward}, and \cite{Carmona-Fouque2016} for instance. 
\end{remark}

\subsection{Financial Implications}\label{F-I}
This subsection is devoted to analysis of the model (\ref{coupled}-\ref{objective}). 
We use the Markov Nash equilibria identical to the $\epsilon$-Nash equilibria. 
Given the strategy of bank $i$,
\be\label{optimal-FI}
\alpha^i_t=(q+(1-\frac{1}{N})\eta^c_t)(\overline X_t-X^i_t)-\psi^c_t, 
\en
the dynamics of bank $i$ are expressed as 
\ba
\nonumber dX^i_t&=&\bigg\{(a+q+(1-\frac{1}{N})\eta^c_t)(\overline X_t-X^i_t)+\gamma_t-\psi^c_t\bigg\}dt+2\sqrt{X_t^i}dW^i_t\\
&=&\left\{\frac{a+q+(1-\frac{1}{N})\eta_t^c}{N}\sum_{j=1}^N(X^j_t-X^i_t)+\gamma_t-\psi^c_t \right\}dt+2\sqrt{X_t^i}dW^i_t,\label{Im-coupled}
\ea
and the total monetary reserve $Y_t=\sum_{j=1}^N X^i_t$ is 
\begin{equation}
dY_t=N(\gamma_t-\psi^c_t )dt+2\sqrt{  Y_t}d\widetilde W_t,
\end{equation}
where $\widetilde W_t$ is a standard Brownian motion in some extension probability space. 

\begin{itemize}

%\begin{figure}
%\includegraphics[width=15cm,height=6cm]{etaandAvsqwithtitle.eps}
%\caption{Plots of $\eta_t$ on the left and $A_t$ on the right for $0\leq t\leq T$ with varied $q$ and $a=1$, $\epsilon=2$, $c=0$, $N=10$, and $T=1$.}
%\label{etavsq}
%\end{figure}

\item According to the analysis in Section \ref{Interbank}, we imposes the instability condition 
\[
\sup_{0\leq t\leq T}(\gamma_t-\psi^c_t )<\frac{2}{N},
\] 
and discuss the tail probability of all defaults based on the total monetary reserve $Y_t$ and its corresponding  first passage time $\tau=\inf\{t> 0;Y_t=0\}$. 
%written as $U(t,y,\gamma,\psi^c)=\PP(\tau > T)$. Denote $\widetilde\gamma_t=\gamma_t-\psi^c_t$ such that $U(t,y,\widetilde\gamma)$  satisfying the PDE
%\begin{equation*}
%\partial_tU+\widehat\gamma\partial_{\widetilde\gamma} U+\widetilde\gamma\partial_y U+\frac{1}{2}y\partial_{yy}U=0,
%\end{equation*}
%with the terminal condition $U(T,y,\widetilde\gamma)=1$ and the boundary condition $U(t,0,\widetilde\gamma)=0$ where 
%\[
%\widehat\gamma=\partial_t\gamma_t+(1-\frac{1}{N})\left((a+q+\frac{1}{N}(1-\frac{1}{N})\eta^c_t)L^c_t+(1-\frac{1}{N})\eta^c_t(\frac{1}{N}\phi^c_t+1)\right).
%\]
Due to $Y_t$ driven by the time-dependent drift, the explicit solution of tail probability $\PP(\tau > T)$ is not obtained in the literature. Therefore, we consider the estimation of the tail probability of all defaults. To this end, define $\hat\tau=\inf\{t>0;\hat Y_t=0\}$ and $\check\tau=\inf\{t> 0;\check Y_t=0\}$ with corresponding dynamics 
\[
d\hat Y_t=N\sup_{0\leq t\leq T}(\gamma_t-\psi^c_{t}) dt +2\sqrt{\hat Y_t}d\widetilde W_t,
\] 
and 
\[
d\check Y_t=N\inf_{0\leq t\leq T}(\gamma_t-\psi^c_t) dt +2\sqrt{\check Y_t}d\widetilde W_t.
\] 
%with the same initial value $Y_0=\sum_{j=1}^N X^j_0$ where $\hat t $ and $\check t$ are defined as
%\ba
%\hat t=\sup_{0\leq t\leq T}\{t:\gamma_t-\psi^c_t\},\\
%\check t=\inf_{0\leq t\leq T}\{t:\gamma_t-\psi^c_t\}.
%\ea
Applying the comparison theorem studied in \cite{Yor2003, Ikeda-Watanabe1977, Li-Yor1999}, we estimate the tail probability of all defaults as 
\ba
\nonumber \Gamma\left(\frac{Y_0^2}{2T};N\inf_{0\leq t\leq T}(\gamma_t-\psi^c_{ t}) \right)&=&\PP(\check\tau>T)\\
\nonumber &\leq&\PP(\tau>T)\\
&\leq&\PP(\hat\tau>T)=\Gamma\left(\frac{Y_0^2}{2T};N\sup_{0\leq t\leq T}(\gamma_t-\psi^c_{t})\right),
\ea
for $T\geq t$ where $\Gamma(x,s):=\int_0^xu^{s-1}e^{-u}du$ for $s>0$, and $x\geq 0$.  

%\item Denoting $A_t=a+q+(1-\frac{1}{N})\eta_t$, we also study the estimation of tail probability of multiple defaults under the condition written as 
%\be 
%\sup_{x\in [0,\infty)^N}|x^{h_i}-x^j|<\inf_{0\leq t\leq T}\bigg\{\frac{N}{A_tk(N-1)}\left(2-k(\gamma_t-\psi^c_t)\right)\bigg\},
%\en
%for some $k\in\{1,\cdots,N\}$ indexes $(h_1,\cdots,h_k)\subset\{1,\cdots,N\}$. Given $A_t\leq N$ for all $0\leq t\leq T$,   comparing the sum of these $k$ banks $Y^k_t=\sum_{i=1}^kX^{h_i}_t$ to the squared Bessel processes $\hat Y^k_t$ with dimension 
%\[
%\overline\gamma=\sup_{0\leq t\leq T}\bigg\{k(\gamma_{ t}-\psi^c_{ t}) +\sup_{x\in[0,\infty)}\frac{A_t}{N}\left|\sum_{i=1}^k\sum_{j=1}^N(x_j-x_{h_{i}})\right|\bigg\},  
%\]
%and $\check Y^k_t$ with dimension
%\[
%\underline\gamma=\inf_{0\leq t\leq T}\bigg\{k(\gamma_{  t}-\psi^c_{  t})+\inf_{x\in[0,\infty)}\frac{A_t}{N}\sum_{i=1}^k\sum_{j=1}^N(x_j-x_{h_i})\bigg\},
%\]
%by the comparison theorem again, we obtain the rough lower and upper bounds for the tail probability of multiple default time $\tau^k=\inf\{t> 0;Y^k_t=0\}$ given by 
%\be
%\Gamma\left(\frac{(Y^k_0)^2}{2T};\underline\gamma\right)=\PP(\check\tau^k>T)\leq\PP(\tau^k>T)\leq\PP(\hat\tau^k>T)=\Gamma\left(\frac{(Y^k_0)^2}{2T};\overline\gamma\right);\quad t\geq 0,
%\en
%where $\check\tau^k$ and $\hat\tau^k$ are the first passage time of $\check Y^k_t$ and $\hat Y^k_t$ written as 
%\[
%\check\tau^k=\inf\{t> 0;\check Y^k_t=0\},\quad \hat\tau^k=\inf\{t> 0;\hat Y^k_t=0\}. 
%\]

\item Though the rebalanced growth rate $\gamma_t-\psi^c_t$ and the symmetric lending preference $A_t/N$ are not purely constants but deterministic homogeneous functions where 
\be\label{increasing-liquidity}
A_t=a+q+(1-\frac{1}{N})\eta_t^c,
\en 
we still apply Proposition 2.4 and Corollary 2.1 in \cite{Fouque-Ichiba}. The stochastic stability of process $X^i_t-\overline X_t$ and the matrix-valued process $(X^i_t-X^j_t)_{1\leq i,j\leq N}$ for all $0\leq t \leq T$ can be qualified through the identity of $\gamma_t-\psi^c_t$ and $A_t/N$.

\end{itemize}

\begin{figure}
\includegraphics[width=15cm,height=6cm]{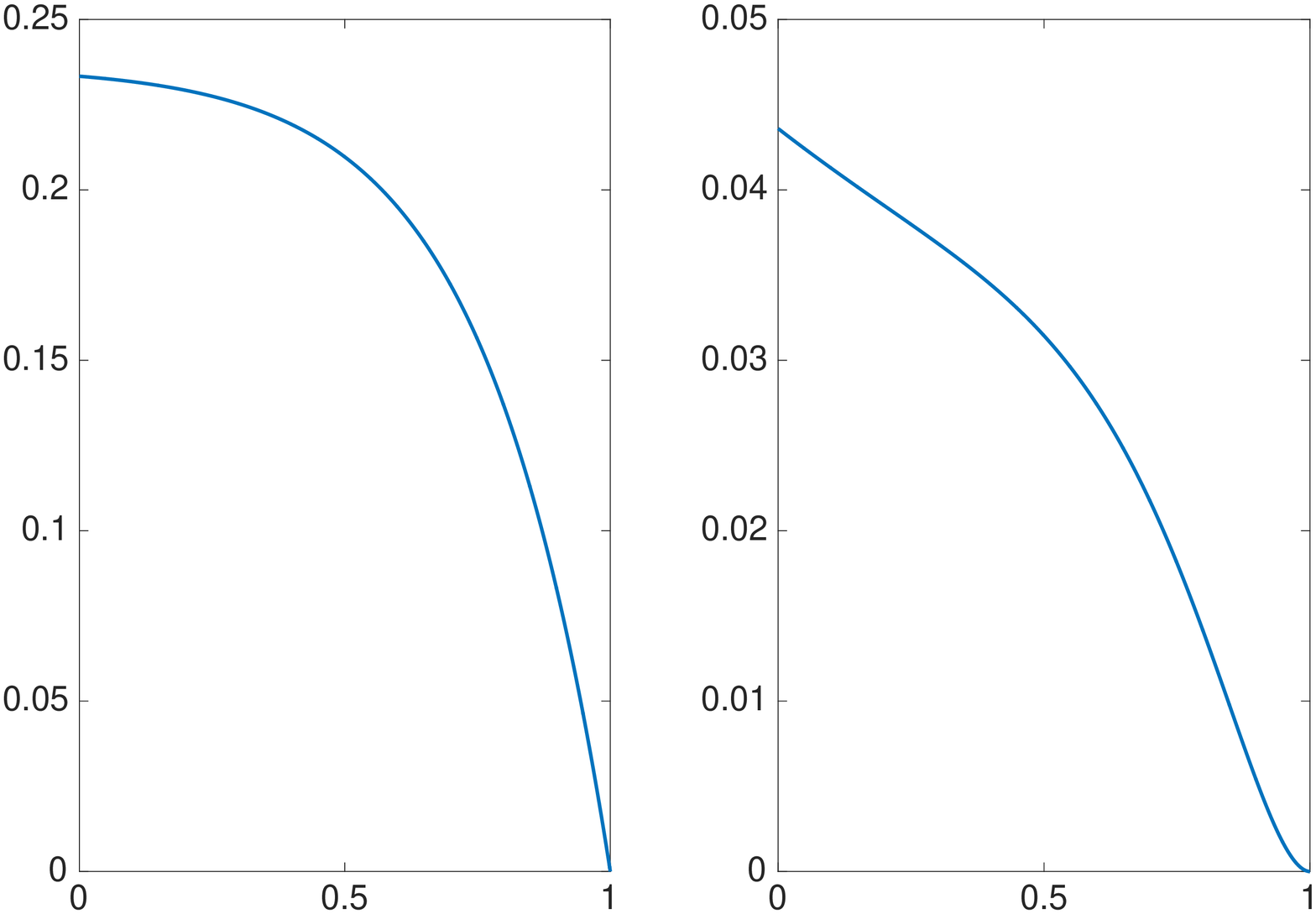}
\caption{Plots of $\eta_t$ on the left and $\psi_t$ on the right for $0\leq t\leq T$ with $a=1$, $q=1$, $\epsilon=2$, $c=0$,  $N=10$, $T=1$.}
\label{etavspsi}
\end{figure}

\begin{figure}
\includegraphics[width=15cm,height=6cm]{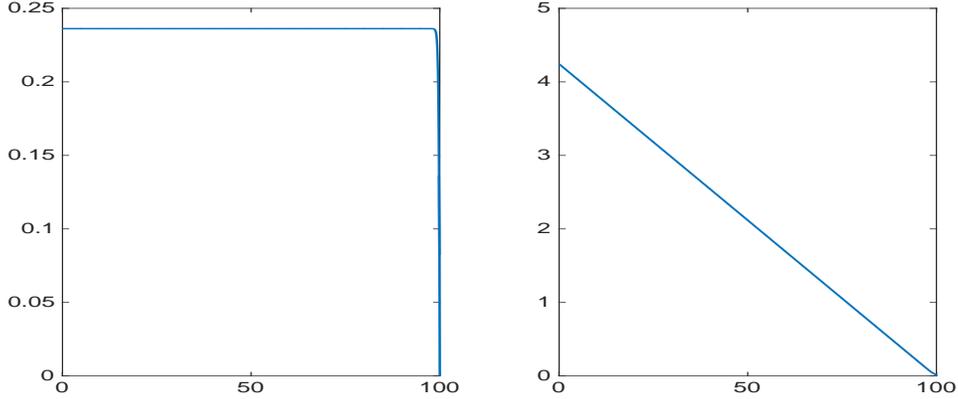}
\caption{Plots of $\eta_t$ on the left and $\psi_t$ on the right for $0\leq t\leq T$ with $a=1$, $q=1$, $\epsilon=2$, $c=0$,  $N=10$, $T=100$.}
\label{etavspsi100}
\end{figure}

As the results obtained in \cite{R.Carmona2013},  the objective function \eqref{objective} yields an increasing liquidity qualified by the larger rate of lending and borrowing given by (\ref{increasing-liquidity}). The fully symmetric lending and borrowing model leads to the Nash equilibrium independent of the growth rate $\gamma_t$ but according to the discussion in Section \ref{Interbank}, the growth rate $\gamma_t$ plays an important role as a stabilizer for the system. 
In particular, due to the volatility driven by the square root function of states written as $\sqrt{X^i_t}$,  under the optimal strategy \eqref{optimal-FI}, bank $i$ intends to borrow less money from the central bank or to lend it more money. In this way, it manipulates its own reserve through the positive deposit rate appearing in the drift as shown in Figure \ref{etavspsi} and  \ref{etavspsi100} in order to minimize the quadratic cost with a smaller volatility. Note that when $T$ is large, $\eta_t^c$ is treated as a constant and $\psi_t^c$ increases almost linearly with $t$.  Hence, the central bank acts as a central deposit corporation rather than a clearing house. However, noting that the growth rate diminishes through the deposit rate $\psi^c_t$, referring to the results in \cite{Fouque-Ichiba}, we observe that the equilibrium creates the possibility of all defaults. Consequently, under the stability condition 
\be\label{stability-cond}
\inf_{0\leq t\leq T}\left(\gamma_t-\psi^c_t\right)>\frac{2}{N}, 
\en
increasing liquidity leads to stability. However, if the system remains in the worse-case scenario given by 
\[
\sup_{0\leq t\leq T}\left(\gamma_t-\psi^c_t\right)\leq\frac{2}{N}, 
\]  
the interbank lending and borrowing system may face immediate systemic risk due to the large liquidity.  Therefore, from the regulator's point of view, the system is stable only when the deposit rate $\psi^c_t$ satisfies the stability condition \eqref{stability-cond}. Consequently, the regulator must control the incentive $q$ and the running penalty $\epsilon$  such that condition \eqref{stability-cond} and $\PP(Y_t=0,\; 0\leq t\leq T)=0$ hold meaning that impossibility of system crash in the fixed time period. In fact, a regulator may propose the restricted admissible set of the equilibrium written as
for $0\leq t\leq T$,
\be\label{admissible-cond}
X^{j,\alpha^i}_t\geq 0,\; \forall j=1,\cdots,N\;\mathrm{and}\; Y_t^{\alpha}>0\;a.s.,
\en  
leading to the weaker stability condition 
\be\label{stability-cond}
\inf_{0\leq t\leq T}\left(\gamma_t-\psi^c_t\right)\geq\frac{2}{N}.
\en
As an example, we discuss the MFG case $N\rightarrow\infty$ without lending and borrowing impositions and the terminal cost penalty that is, $a=c=0$, and  identify the sufficient conditions that satisfy \eqref{stability-cond}.

\begin{prop}\label{Prop-1}
Given the constant growth rate $\gamma_t=\gamma$, $N\rightarrow\infty$ and $a=c=0$, if the incentive $q$ satisfies
\[
\frac{\sqrt{2\epsilon}}{\sqrt{\gamma+2}}<q\leq\sqrt{\epsilon}, 
\]
 for fixed $\epsilon$ or if the running penalty $\epsilon$ satisfies
\[
q^2\leq\epsilon<\frac{\gamma+2}{2}q^2, 
\]
for fixed $q$, the stability condition \eqref{stability-cond} holds.
\end{prop}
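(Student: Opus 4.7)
The plan is to reduce the stability condition to a uniform upper bound on the deposit rate $\psi_t$ that depends only on $q$ and $\epsilon$, and then check that the hypotheses force $\gamma$ to strictly exceed this bound. With $a=c=0$ and in the mean field limit, the stability condition \eqref{stability-cond} degenerates to $\sup_{0\leq t\leq T}\psi_t\leq\gamma$, and the MFG ODE system collapses to
\[
\dot\eta^m_t=(\eta^m_t+q)^2-\epsilon,\qquad \dot L^m_t=q L^m_t+2\eta^m_t,
\]
with terminal conditions $\eta^m_T=L^m_T=0$ and $\psi_t=-L^m_t$; the convexity constraint $q\leq\sqrt{\epsilon}$ ensures $\lambda:=\sqrt{\epsilon}-q\geq 0$.

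First I would bound $\eta^m_t$ by $\lambda$ via phase-line analysis. Factoring $\dot\eta^m_t=(\eta^m_t-\lambda)(\eta^m_t+\nu)$ with $\nu:=\sqrt{\epsilon}+q>0$, the terminal value $0$ lies in the interval $(-\nu,\lambda)$ on which the right hand side is strictly negative; running backwards in time $\eta^m$ is therefore increasing, and by uniqueness of solutions it cannot cross the upper fixed point $\lambda$. This gives $0\leq\eta^m_t<\lambda$ on $[0,T]$. Combining with the integral representation $\psi_t=2\int_t^T e^{q(t-s)}\eta^m_s\,ds$ then yields
\[
\psi_t\leq 2\lambda\int_t^T e^{q(t-s)}\,ds=\frac{2\lambda}{q}\bigl(1-e^{-q(T-t)}\bigr)\leq\frac{2(\sqrt{\epsilon}-q)}{q},
\]
uniformly in $t\in[0,T]$ and $T>0$.

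It remains to verify that the proposition's hypotheses force $\gamma>2(\sqrt{\epsilon}-q)/q$. The two stated conditions $q>\sqrt{2\epsilon}/\sqrt{\gamma+2}$ and $\epsilon<(\gamma+2)q^2/2$ are algebraically equivalent (by squaring), and either one yields $\sqrt{\epsilon}<q\sqrt{(\gamma+2)/2}\leq q(\gamma+2)/2$, where the last step uses $\sqrt{x}\leq x$ for $x=(\gamma+2)/2\geq 1$, i.e.\ for $\gamma\geq 0$. Rearranging gives $2(\sqrt{\epsilon}-q)/q<\gamma$ as required, so $\sup_{0\leq t\leq T}\psi_t<\gamma$ and the stability condition holds in the $N\to\infty$ limit. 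The only non-algebraic ingredient is the phase-line bound on the Riccati solution $\eta^m$; the integral bound on $\psi$ is a one-line Gr\"onwall-type estimate and the final implication is elementary arithmetic, so I anticipate no serious technical obstacle.
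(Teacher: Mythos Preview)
Your argument is correct and reaches the conclusion, but it takes a different route from the paper. The paper works directly with the explicit Riccati formula for $\eta_t$ (with $a=c=0$, $N\to\infty$): bounding the denominator by $q(e^{(\delta^+-\delta^-)(T-t)}-1)$ it obtains $\eta_t\le(\epsilon-q^2)/q$, and then the same integral bound on $L_t$ that you use gives $\psi_t\le 2(\epsilon-q^2)/q^2$. The sufficient condition $\gamma>2(\epsilon-q^2)/q^2$ is then \emph{identical} to $q>\sqrt{2\epsilon}/\sqrt{\gamma+2}$, so the proposition's threshold drops out with no further manipulation.

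By contrast, your phase-line argument yields the sharper bound $\eta^m_t\le\sqrt{\epsilon}-q$ (this is the stable equilibrium of the Riccati ODE, hence the best possible uniform bound), leading to the stronger estimate $\psi_t\le 2(\sqrt{\epsilon}-q)/q$. The price you pay is the extra step $\sqrt{(\gamma+2)/2}\le(\gamma+2)/2$ to connect back to the proposition's stated hypothesis; this works because $\gamma\ge0$, but it means you are in effect proving a strictly stronger statement (the condition $q>2\sqrt{\epsilon}/(\gamma+2)$ already suffices) and then weakening it. Your approach has the advantage of not requiring the closed-form solution of the Riccati equation, while the paper's is more direct because its coarser bound on $\eta_t$ happens to land exactly on the proposition's threshold. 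One minor point: your phase-line statement ``$0$ lies in $(-\nu,\lambda)$'' excludes the boundary case $q=\sqrt{\epsilon}$ (where $\lambda=0$); there $\eta^m\equiv0$ and $\psi\equiv0$, so the conclusion is immediate, but you should note it separately.
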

\begin{proof}
As $N\rightarrow\infty$, the deposit rate becomes $\psi_t=-L_t$. Using
$$
\delta^+=-q+\sqrt{\epsilon},\ \delta^-=-q-\sqrt{\epsilon},
$$
and 
$$
\begin{array}{l}
\delta^+-\delta^- e^{(\delta^+-\delta^-)(T-t)}\\
=\delta^+-\delta^--\delta^- (e^{(\delta^+-\delta^-)(T-t)}-1)\\
=(q+\sqrt{\epsilon}) (e^{(\delta^+-\delta^-)(T-t)}-1)\\
\geq q (e^{(\delta^+-\delta^-)(T-t)}-1),
\end{array}
$$
we obtain
\ban
\eta_t&=&\frac{-(\epsilon-q^2)\left(e^{(\delta^+-\delta^-)(T-t)}-1\right)}
{\left(\delta^-e^{(\delta^+-\delta^-)(T-t)}-\delta^+\right)}\\
&\leq&\frac{\epsilon-q^2}{q}.
\ean 
and consequently
\ban
L_t&=&-2\int_t^Te^{q (t-s)}  \eta_s ds\\
&\geq&-2\frac{\epsilon-q^2}{q}\int_t^Te^{q (t-s)}ds\\
&=&2\frac{\epsilon-q^2}{q^2}\left(e^{-q(T-t)}-1\right) \\
 &\geq&2\left(1-\frac{\epsilon}{q^2}\right).
\ean
Since $\psi_t^c=-L_t$ and $\gamma-\psi^c_t=\gamma+L_t$ is bounded from below by $\gamma+2\left(1-\frac{\epsilon}{q^2}\right)$, a sufficient condition that guarantees condition \eqref{stability-cond} holds is given by 
\[
\gamma+2\left(1-\frac{\epsilon}{q^2}\right)> 0
\]
implying that for fixed $\epsilon$, the incentive $q$ satisfies
\[
\frac{\sqrt{2\epsilon}}{\sqrt{\gamma+2}}<q\leq\sqrt{\epsilon}, 
\]
or for fixed $q$, the running penalty $\epsilon$ satisfies
\[
q^2\leq\epsilon<\frac{\gamma+2}{2}q^2.
\]
\end{proof}
According to Proposition \ref{Prop-1}, given the running penalty $\epsilon$, a regulator must impose a minimum incentive $q$ of $\frac{\sqrt{2\epsilon}}{\sqrt{\gamma+2}}$ to satisfy \eqref{stability-cond} leading to the avoidance of system crash during the fixed time period.

\section{Nash Equilibria: Case of the Infinite Time Horizon}\label{Inf-Nash}
The aim of this section is to investigate the case of the infinite time horizon stochastic game. We extend the quadratic cost proposed by \cite{R.Carmona2013} in the infinite time horizon given a discount factor $e^{-rt}$ and a constant growth rate $\gamma\geq 0$ and study the behavior of the corresponding adding liquidity and deposit rate influenced by the discount rate $r$ and the incentive $q$ respectively. Adopting the notations in Section \ref{HJB-approach}, given the optimal strategies $\hat\alpha^j$ for $j\neq i$, the value function of bank $i$ is written as
\be\label{value-inf}
V^i(x)=\inf_{\alpha^i\in{\cal{A}}^i}\EE_{x}\left\{\int_0^\infty e^{-rt}f_i(X^i_s, \hat X^{-i}_s, \alpha^i_s)ds\right\}.
\en
where 
\[
f_i(X_t,\alpha^i_t)
=\frac{(\alpha^i_t)^2}{2}-q\alpha^i_t(\overline{X}_t-X^i_t)
+\frac{\epsilon}{2}(\overline{X}_t-X^i_t)^2,
\]
and $q^2\leq\epsilon$ so that $f_i(x,\alpha)$ is convex in $(x,\alpha)$ 
subject to the dynamics   
\be\label{coupled-inf}
d X^{i}_t = \left(\frac{a}{N}\sum_{j=1}^N(  X^j_t- X^i_t)+\gamma\right)dt+\alpha^i_t dt
 +2\sqrt{X^i_t}d  {W}^{i}_t\,, \quad  i=1,\cdots, N.
\en
with the initial $X^i_0=\xi^i$ identical to (\ref{coupled}). Similarly, in the Markovian setting, the HJB equation reads
\ba\label{HJB-inf}
\nonumber -rV^i&+&\inf_{\alpha^i}\bigg\{
\sum_{j\neq i}\bigg(a(\overline{x}-x^j)+\gamma+{\hat\alpha^j(t,x)}\bigg)\partial_{x^j}V^i
+ \bigg(a(\overline{x}-x^i)+\gamma+{\alpha^i}\bigg)\partial_{x^i}V^i\\
&+&2\sum_{j=1}^N   x^j \partial_{x^jx^j}V^i+\frac{(\alpha^{i})^2}{2}-q\alpha^{i}\left(\overline{x}-x^i\right)+\frac{\epsilon}{2}(\overline x-x^i)^2\bigg\} =0,
\ea
where $\hat\alpha^j$ for all $j\neq i$ represent the optimal strategies for bank $j$ for all $j\neq i$. The first order condition yields the strategy
\[
\hat\alpha^i_t=q(\overline X_t-X_t)-\partial_{x^i}V^i,
\]
and the solution to the value function $V^i$ is given by
\be\label{ansatz-inf}
V^i(x)=\frac{\eta}{2}(\overline x-x^i )^2+L(\overline x_t-x^i )+\phi\overline x +\mu,
\en
where $\eta$, $L$, $\phi$, and $\mu$ satisfy
\ban
 -r\eta&=&2(a+q)\eta+(1-\frac{1}{N^2})\eta^2-(\epsilon-q^2),\\
  - rL&=&\left(a+q+ \frac{1}{N}(1-\frac{1}{N})\eta\right)L+\frac{1}{N}(1-\frac{1}{N})\eta\phi+2(1-\frac{2}{N})\eta,\\
 -r\phi&=&-2(1-\frac{1}{N})\eta,\\
 -r\mu&=&\frac{1}{N}(1-\frac{1}{2N})\phi^2-\frac{1}{2}(1-\frac{1}{N})^2L^2-(1-\frac{1}{N})^2L\phi-\gamma\phi,
\ean
leading to the solutions
\be\label{eta-inf}
\eta=\frac{\epsilon-q^2}{(a+q+\frac{r}{2})+\sqrt{(a+q+\frac{r}{2})^2+(1-\frac{1}{N^2})(\epsilon-q^2)}},
\en
\be\label{phi-inf}
\phi=\frac{2}{r}(1-\frac{1}{N})\eta,
\en
\be\label{L-inf}
L=\frac{\frac{1}{N}(\frac{1}{N}-1)\eta\phi+2(\frac{2}{N}-1)\eta}{\left(a+q+r+ \frac{1}{N}(1-\frac{1}{N})\eta\right)},
\en
and
\be
\mu=\frac{\frac{1}{2}(1-\frac{1}{N})^2L^2+(1-\frac{1}{N})^2L\phi+\gamma\phi-\frac{1}{N}(1-\frac{1}{2N})\phi^2}{r}.
\en
Hence, the Markov Nash equilibrium strategy $\hat\alpha^i$ is  written as
\be\label{optimal-control-inf}
\hat\alpha_t^i=\left(q+(1-\frac{1}{N})\eta\right)(\overline X_t-X^i_t)-\psi. 
\en
The individual monetary reserve $X^i_t$ and the total monetary reserve $Y_t=\sum_{i=1}^N X^i_t$  are given by
\be
dX^i_t=\left\{\left(a+q+(1-\frac{1}{N})\eta\right)(\overline X_t-X^i_t)+\gamma-\psi\right\}dt+2\sqrt{X^i_t}dW^i_t.
\en
and 
\be\label{Y-inf}
dY_t=N(\gamma-\psi)dt+ 2\sqrt{  Y_t}d\widetilde W_t,
\en
where $\widetilde W_t$ is defined in Section \ref{Interbank}. Similarly, in order to guarantee the regularity condition of the total monetary reserve (\ref{Y-inf}) and the admissibility of the optimal strategies $\hat\alpha^i$ for $i=1,\cdots,N$, using Proposition 2.1 \cite{Fouque-Ichiba}, we assume 
\be\label{regularity-cond-1-inf}
0\leq\psi=(\frac{1}{N}-1)L+\frac{1}{N}\phi\leq \gamma,
\en
and 
\be\label{regularity-cond-2-inf}
a+q+(1-\frac{1}{N})\eta\leq N.
\en 
The corresponding financial interpretation is discussed in Section \ref{Exact-Nash}.  We verify that $V^i$ and $\hat\alpha^i$ is the solution to the problem (\ref{value-inf}-\ref{coupled-inf}) through Theorem \ref{Ver-Thm-inf} in Appedix \ref{Appex-1} .

Based on the results in Section \ref{Interbank}, we now consider the re-balanced growth rate $\gamma-\psi$ as a critical value to determine the possibility of system crash.  In the case of $\gamma-\psi>\frac{2}{N}$, systemic risk is always avoided. In the case of $\gamma-\psi=\frac{2}{N}$, banks will face a future crisis. In the case of $0<\gamma-\psi<\frac{2}{N}$, all banks will default at some large time and the total monetary reserve reflects at the point \{$0$\}. In the case of $\gamma-\psi=0$, all banks will default within a the finite time and remain at zero almost surely. 
 
In addition, owing to homogeneity of the value function \eqref{value-inf}, we obtain the identical growth rate $\gamma-\psi$ and the symmetric lending preference $A/N$ where $A=a+q+(1-\frac{1}{N})\eta$.  Similarly, Proposition 2.4 and Corollary 2.1 in \cite{Fouque-Ichiba} ensure stochastic stability of the process $X^i_t-\overline X_t$ and the matrix-valued process $(X^i_t-X^j_t)_{1\leq i,j\leq N}$ for all $t\geq 0$.

%\begin{figure}
%\includegraphics[width=15cm,height=6cm]{etaandAvsqinf.eps}
%\caption{Plots of $\psi$  with $q=1$(dash-line), $q=1.1$(dot-dash line), and $q=1.2$(solid line) using fixed $a=1$, $\epsilon=2$, $c=0$, $r=0.01$, and $N=10$.}
%\label{eta-A-inf}
%\end{figure}

\begin{prop}\label{prop-inf-1}
For fixed $a\geq 0$, $\epsilon\geq 0$, and the discount rate $r\geq 0$, given the incentive $0\leq q\leq \sqrt{\epsilon}$, the first derivative of $\eta$ with respect to $q$ is 
\begin{equation}\label{eta-q}
\partial_q\eta=\frac{-1+\frac{a+\frac{r}{2}+\frac{q}{N^2}}{\sqrt{(a+q+\frac{r}{2})^2+(1-\frac{1}{N^2})(\epsilon-q^2)}}}{1-\frac{1}{N^2}}< 0,\;0\leq q\leq \sqrt{\epsilon}.
\end{equation}
 \end{prop}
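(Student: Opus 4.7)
The plan is to compute $\partial_q\eta$ by implicit differentiation of the quadratic relation that defines $\eta$, and then read off its sign. From the ODE system just preceding the statement, $\eta$ is the nonnegative root of
$$
(1-\tfrac{1}{N^{2}})\eta^{2}\;+\;\bigl(2(a+q)+r\bigr)\eta\;-\;(\epsilon-q^{2})\;=\;0. \qquad(\star)
$$
Treating $\eta$ as a $C^{1}$ function of $q$ (the discriminant is strictly positive on the admissible range $q^{2}\leq\epsilon$) and differentiating $(\star)$ in $q$, I get
$$
\partial_q\eta \;=\; -\,\frac{\eta+q}{(1-\tfrac{1}{N^{2}})\eta + a + q + \tfrac{r}{2}}.
$$
By \eqref{eta-inf}, the denominator equals $\sqrt{(a+q+\tfrac{r}{2})^{2}+(1-\tfrac{1}{N^{2}})(\epsilon-q^{2})}$. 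A short rearrangement using the identity $(1-\tfrac{1}{N^{2}})(\eta+q) = \sqrt{(a+q+\tfrac{r}{2})^{2}+(1-\tfrac{1}{N^{2}})(\epsilon-q^{2})} - (a+\tfrac{r}{2}+\tfrac{q}{N^{2}})$ (which itself follows by substituting $(1-\tfrac{1}{N^{2}})\eta = \sqrt{\,\cdot\,}-(a+q+\tfrac{r}{2})$ from \eqref{eta-inf}) recasts this ratio into the closed form displayed in \eqref{eta-q}.

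For strict negativity, it suffices to show $\eta+q>0$, since the denominator above is visibly positive for $a,r,\epsilon\geq 0$ and $0\leq q\leq\sqrt{\epsilon}$. The formula \eqref{eta-inf} gives $\eta\geq 0$ on the admissible range, with $\eta=0$ occurring exactly when $\epsilon=q^{2}$. Hence $\eta+q=0$ would force $q=0$ and $\epsilon=0$, a degenerate case in which the admissible interval $[0,\sqrt{\epsilon}]$ collapses to a single point and the proposition is vacuous. In every nontrivial situation $\eta+q>0$, and therefore $\partial_q\eta<0$ throughout $0\leq q\leq\sqrt{\epsilon}$.

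The only delicate step is the algebraic rewriting that converts the implicit-derivative expression $-(\eta+q)/\sqrt{\,\cdot\,}$ into the more symmetric form \eqref{eta-q}; once $(1-\tfrac{1}{N^{2}})\eta$ is replaced by $\sqrt{\,\cdot\,}-(a+q+\tfrac{r}{2})$ via the explicit solution of $(\star)$, this reduces to a one-line manipulation, and the sign analysis then proceeds as above.
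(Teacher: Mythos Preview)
Your argument is correct. The implicit differentiation of $(\star)$ is carried out accurately, the identification of the denominator with the square root via \eqref{eta-inf} is valid, and the rewriting into the displayed form \eqref{eta-q} goes through as you indicate. Your sign argument based on $\eta+q>0$ is also sound, and you correctly isolate the degenerate case $q=\epsilon=0$.

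The paper takes a different route for the sign. It does not derive the formula \eqref{eta-q} (that is treated as part of the proposition's statement) but focuses solely on showing the numerator is negative: it reduces this to the inequality
\[
\Bigl(a+\tfrac{r}{2}+\tfrac{q}{N^{2}}\Bigr)^{2}<\Bigl(a+q+\tfrac{r}{2}\Bigr)^{2}+\Bigl(1-\tfrac{1}{N^{2}}\Bigr)(\epsilon-q^{2}),
\]
which after expansion collapses to $(1-\tfrac{1}{N^{2}})\bigl(\tfrac{q^{2}}{N^{2}}+(2a+r)q+\epsilon\bigr)>0$. Your approach has the advantage of \emph{deriving} the formula rather than merely verifying its sign, and the intermediate expression $\partial_q\eta=-(\eta+q)/\sqrt{\,\cdot\,}$ makes negativity transparent without any squaring or expansion. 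The paper's approach, on the other hand, is purely algebraic and self-contained once the closed form is accepted; it does not need to invoke the positivity of $\eta$ from \eqref{eta-inf}. Either way the content is the same, and your handling of the boundary case is slightly more explicit than the paper's.
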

\begin{proof}
It suffices to show 
\[
{a+\frac{r}{2}+\frac{q}{N^2}}<{\sqrt{(a+q+\frac{r}{2})^2+(1-\frac{1}{N^2})(\epsilon-q^2)}},
\]
which is equivalent to show that
\[
\left({a+\frac{r}{2}+\frac{q}{N^2}}\right)^2<(a+q+\frac{r}{2})^2+(1-\frac{1}{N^2})(\epsilon-q^2).
\]
Consequently, we obtain
\ban
&&(a+q+\frac{r}{2})^2+(1-\frac{1}{N^2})(\epsilon-q^2)-\left({a+\frac{r}{2}+\frac{q}{N^2}}\right)^2\\
&=&\left(1-\frac{1}{N^2}\right)\left(\frac{1}{N^2}q^2+(2a+r)q+\epsilon\right)>0,
\ean
for $0\leq q\leq \sqrt{\epsilon}$ which completes the proof.
\end{proof}
For fixed $a\geq 0$, $\epsilon\geq 0$, and $q\geq 0$, $\eta$ given by \eqref{eta-inf} is a decreasing function of $r$.  According to \eqref{phi-inf}  and \eqref{L-inf}, we obtain 
\be\label{psi-inf}
\psi=(\frac{1}{N}-1)L+\frac{1}{N}\phi=\frac{\frac{1}{N}(\frac{1}{N}-1)(\frac{2}{N}-1)\eta\phi+2(\frac{1}{N}-1)(\frac{2}{N}-1)\eta+\frac{1}{N}(a+q+r)\phi}{a+q+r+\frac{1}{N}(1-\frac{1}{N})\eta}.
\en
Inserting $\phi$ into \eqref{psi-inf} and canceling $\eta$, $\psi$ becomes
\[
\psi=\frac{2(\frac{1}{N}-1)(\frac{2}{N}-1)\left(\frac{1}{rN}(1-\frac{1}{N})\eta^2+\eta\right)+\frac{2}{rN}(1-\frac{1}{N})(a+q+r)\eta}{a+q+r+\frac{1}{N}(1-\frac{1}{N})\eta}.
\]
By Proposition \ref{prop-inf-1},  a large $q$ creates a small deposit rate. In addition, observe that a large $r$ also suppresses the slight deposit rate. Therefore, a regulator must impose the restrictions on both the incentive and the discount rate to satisfy stability condition $\psi(q,r)<\gamma-\frac{2}{N}$ and hence facilitate the avoidance of all defaults. For example, assuming $N\rightarrow\infty$ with $a=0$, for any given $r$ and $\epsilon$, a regulator must impose a lower bound on the incentive $q$ as
\[
q>\frac{-(\frac{1}{4}\gamma^2+\frac{3}{4}\gamma)r+\sqrt{(\frac{1}{4}\gamma^2+\frac{3}{4}\gamma)^2r^2+\left(\epsilon-\left(\frac{1}{4}\gamma^2+\frac{1}{2}\gamma\right)r^2\right)\left(\frac{1}{2}\gamma+1\right)^2}}{(\frac{1}{2}\gamma+1)^2},  
\]
if $\epsilon\geq \left(\frac{1}{4}\gamma^2+\frac{1}{2}\gamma\right)r^2$ or $q\geq 0$, otherwise. These constraints will ensure the stability condition $\psi(q,r)<\gamma-\frac{2}{N}$ and the permanent survival of the inter-bank system. 

\section{Conclusion}\label{conclusion}
We study the linear quadratic regulator subject to the coupled CIR type diffusions. Given some regularity conditions, the admissible Markov Nash equilibrium creates system instability since the deposit rate given by the equilibrium diminishes the individual growth rate. A regulator must govern the growth rate of the ensemble total in order to prevent system crash. This problem can be extended in some directions. First, the behavior of simple lending and borrowing can be replaced with the system under relative performance concern proposed by \cite{Touzi2015}. Second, it would be interesting to study the delay obligation based on the system referring to \cite{Carmona-Fouque2016}. Due to the character of CIR type processes, we must tackle the admissibility of the equilibrium. Third, an individual bank may have a more restricted admissible set satisfying 
\[
X^{j,\alpha^i}_t\geq 0,\; \forall j\neq i\; {\mathrm{and}}\;X^{i,\alpha^i}_t> 0\;a.s..
\]  
The explicit solution of the admissible equilibrium would be of great interest.  The above extensions are ongoing research topics.   

\section*{Acknowledgement}
The author would like to thank  Jean-Pierre Fouque and Shuenn-Jyi Sheu for the conversations and suggestion on this subject.

{\appendix
\section{Verification Theorem}\label{Appex-1}
{\theorem\label{Ver-Thm}(Verification Theorem)\\
Given the optimal strategies $\hat\alpha^j$ given by (\ref{optimal-finite-ansatz}) for all $j\neq i$, $V^i$ given by (\ref{ansatz-Vi}) is the value function associated to the problem (\ref{value-function}) subject to (\ref{coupled}) and $\hat\alpha^i$  is the optimal strategy for bank $i$ and also the Markov Nash equilibrium. 
} 
\begin{proof}
Denote by $\tilde\alpha$ an admissible strategy obtained from the strategy $\hat\alpha=(\hat\alpha^1,\cdots,\hat\alpha^N)$ by replacing the $i$-th component by $\tilde\alpha^i$. That is, 
$$
\tilde\alpha=(\hat\alpha^1,\cdots,\hat\alpha^{i-1}, \alpha^i,\hat\alpha^{i+1},\cdots\hat\alpha^N).
$$
When $\alpha_t$ is replaced by $\tilde \alpha_t$ in (\ref{coupled}), we denote the solution  by 
$$
\tilde{X}_t=(\tilde{X}_t^1, \tilde{X}_t^2, \cdots, \tilde{X}_t^N).
$$
When $\alpha_t$ is replaced by $\hat \alpha_t$ in (\ref{coupled}), we denote the solution  by 
$$
\hat{X}_t=(\hat{X}_t^1, \hat{X}_t^2, \cdots, \hat{X}_t^N).
$$
We want to prove the following. For any admissible strategy $\tilde\alpha$, we have
\be \label{upper}
V^i(t,x)\leq \EE_{t,x}\left\{\int_t^T f_i(\tilde{X}_t, \alpha^i_s)ds+g_i(\tilde{X}_T)\right\}.
\en
and for $\hat{\alpha}$, we have
\be \label{optim}
V^i(t,x)= \EE_{t,x}\left\{\int_t^T f_i(\hat{X}_t, \hat{\alpha}^i_s)ds+g_i(\hat{X}_T)\right\},
\en
Using (\ref{upper}) and (\ref{optim}) implies that $\hat{\alpha}^i$ is an optimal strategy for $i$-th bank  when other banks use $\hat{\alpha}^j, j\neq i$ as their strategies.
We first consider (\ref{upper}) and (\ref{coupled})  for $\alpha_t=\tilde{\alpha}_t$. We can assume that 
\be \label{condition}
\EE_{t,x}\left\{\int_t^T f_i(\tilde{X}_t, \alpha^i_s)ds\right\}<\infty,
\en
otherwise, (\ref{upper}) holds  automatically. Define the exit time 
\[
\theta_M=\inf\{t;\;  |\tilde{X}_t|\geq M \}.
\]
%\[
%\theta=\inf\{t;\; |\tilde{X}_t|<\frac{1}{N}\; {\mathrm{or}}\; |X_t|> N \}.
%\]
 Under  (\ref{condition}), we shall prove the following properties:
\be \label{nonexplosion}
P(\theta_M\leq T)\rightarrow 0,\ M\rightarrow \infty.
\en
and
\be \label{ui}
\EE_{t, x}[\max_{t\leq s\leq T}|\tilde{X}_s|^2]<\infty.
\en
We postpone the proof of (\ref{nonexplosion}) and (\ref{ui}). We now prove (\ref{upper}). 

Given  $\hat\alpha^j$  by (\ref{optimal-finite-ansatz}) for all $j\neq i$, applying It\'o's formula on $V^i(s, \tilde{X}_s)$,  we obtain 

\ba
\nonumber&& dV^i(s,\tilde{X}_{s})\\
\nonumber&=&\bigg\{\partial_sV^i(s,\tilde{X}_s)+\sum_{j\neq i} \bigg(a(\overline{\tilde{X}}_s-\tilde{X}^j_s)+\gamma_s+{\hat\alpha^j_s(s, \tilde{X}_s) }\bigg)\partial_{x^j}V^i(s,\tilde{X}_s)\\
\nonumber&&+\bigg(a(\overline{\tilde{X}}_s-\tilde{X}^i_s)+\gamma_s+{\tilde\alpha^i}_s\bigg)\partial_{x^i}V^i(s,\tilde{X}_s)+ 2\sum_{j=1}^N   \tilde{X}^j_s \partial_{x^jx^j}V^i(s, \tilde{X}_s)\bigg\}ds \\
\nonumber&&+\sum_{j=1}^N2\sqrt{\tilde{X}^j_s}\partial_{x^j}V^i(s, \tilde{X}_s) dW^j_s,
\ea
and then
\ba
\nonumber&&V^i(T\wedge\theta_M,\tilde{X}_{T\wedge\theta_M})\\
\nonumber&=&V^i(t,x)+\int_t^{T\wedge\theta_M}\bigg\{\partial_sV^i(s,\tilde{X}_s)+\sum_{j\neq i} \bigg(a(\overline{\tilde{X}}_s-\tilde{X}^j_s)+\gamma_s+{\hat\alpha^j_s(s, \tilde{X}_s) }\bigg)\partial_{x^j}V^i(s,\tilde{X}_s)\\
\nonumber&&+\bigg(a(\overline{\tilde{X}}_s-\tilde{X}^i_s)+\gamma_s+{\tilde\alpha^i}_s\bigg)\partial_{x^i}V^i(s,\tilde{X}_s)+ 2\sum_{j=1}^N   \tilde{X}^j_s \partial_{x^jx^j}V^i(s, \tilde{X}_s)\bigg\} ds\\
\nonumber&&+\int_t^{T\wedge\theta_M}\sum_{j=1}^N2\sqrt{\tilde{X}^j_s}\partial_{x^j}V^i(s, \tilde{X}_s) dW^j_s.
\ea
Taking the expectation on both sides and using 
\ba \label{positive}
\nonumber && \partial_{t}V^i+\sum_{j\neq i}\bigg(a(\overline{x}-x^j)+\gamma_t+{\hat\alpha^j }\bigg)\partial_{x^j}V^i+\bigg(a(\overline{x}-x^i)+\gamma_t+{\alpha^i}\bigg)\partial_{x^i}V^i\\
&&+ 2\sum_{j=1}^N   x^j \partial_{x^jx^j}V^i+\frac{(\alpha^{i})^2}{2}-q\alpha^{i}\left(\overline{x}-x^i\right)+\frac{\epsilon}{2}(\overline x-x^i)^2 \geq 0
\ea
give
\ba
\nonumber  V^i(t,x) &\leq&\EE\bigg\{\int_t^{T\wedge\theta_M}\left(\frac{(\alpha^{i}_s)^2}{2}-q\alpha^{i}_s\left(\overline{\tilde X}_s-\tilde X^i_s\right)+\frac{\epsilon}{2}(\overline {\tilde X}_s-\tilde X^i_s)^2\right)ds \\
&&+ V^i(T\wedge\theta_M,\tilde X_{T\wedge\theta_M})\bigg\}.\label{ver-ineq}.
\ea
(\ref{ui}) implies  $V^i(T\wedge\theta_M,X_{T\wedge\theta_M})$ is uniformly integrable, since
$$
|V^i(T\wedge\theta_M,\tilde X_{T\wedge\theta_M})|\leq c(1+ \sup_{t\leq s\leq T} |\tilde{X}_s|^2).
$$ 
Together with (\ref{nonexplosion}), we obtain as $M\rightarrow\infty$
$$
\EE_{t, x}[ V^i(T\wedge\theta_M,\tilde X_{T\wedge\theta_M})]\rightarrow \EE_{t, x}[g(\tilde{X}_T)].
$$
We note
\ban
\nonumber && \EE\bigg\{\int_t^{T\wedge\theta_M}\left(\frac{(\alpha^{i}_s)^2}{2}-q\alpha^{i}_s\left(\overline{\tilde{X}}_s-\tilde{X}^i_s\right)+\frac{\epsilon}{2}(\overline{ \tilde{X}}_s-\tilde{X}^i_s)^2\right)ds \bigg\}\\
&&  \leq \EE\bigg\{\int_t^{T}\left(\frac{(\alpha^{i}_s)^2}{2}-q\alpha^{i}_s\left(\overline{\tilde{X}}_s-\tilde{X}^i_s\right)+\frac{\epsilon}{2}(\overline{ \tilde{X}}_s-\tilde{X}^i_s)^2\right)ds \bigg\},
\ean
since the integrand is nonnegative. 
Putting the above results together, we can obtain
\ba
V^i(t,x_t) \leq\EE\bigg\{\int_t^{T}\left(\frac{(\alpha^{i}_s)^2}{2}-q\alpha^{i}_s\left(\overline{\tilde X}_s-\tilde X^i_s\right)+\frac{\epsilon}{2}(\overline {\tilde X}_s-\tilde X^i_s)^2\right)ds 
+g(\tilde{X}_T))\bigg\}.\label{V(t,x)}
\ea
This completes the proof of (\ref{upper}).

We now prove (\ref{nonexplosion}).  Recall
$$
\tilde{Y}_t=\tilde{X}_t^1+\tilde{X}_t^2++\cdots+ \tilde{X}_t^N =N\overline{\tilde{X}}_t.
$$
%\[
%\theta^{\tilde{Y}}_{\widetilde M}=\inf\{t;\;  \tilde{Y}_t \notin [0,\widetilde M] \}.
%\]
%where 
The monetary reserve $\tilde{X}^i_t$ is given 
\be\label{ver-Xit}
  d\tilde{X}^i_t=\left(a(\frac{1}{N}\tilde{Y}_t-\tilde{X}^i_t)+\gamma_t+\alpha^i_t\right)dt+2\sqrt{\tilde{X}^i_t}dW^i_t,
\en
and given $\hat\alpha_t^j$ for $j\neq i$, the total monetary reserve $\tilde{Y}_t$ is governed by  
\[
 d\tilde{Y}_t =(N\gamma_t+\sum_{j\neq i}\hat\alpha_t^j+\alpha^i_t)dt++2\sqrt{  \tilde{Y}_t}d\widetilde W_t,
\]
where
$$
d \widetilde W_t=\frac{1}{\sqrt{  \tilde{Y}_t}}\sum_{j=1}^N \sqrt{\tilde{X}^j_t} dW_t^j
$$
is a Brownian motion and  
$$
d \tilde{X}^i_t \cdot d\tilde{Y}_t= 4 \tilde{X}^i_t dt.
$$
Using (28), we have
\ba
\nonumber \sum_{j\neq i} \hat{\alpha}^j  &=& -(N-1) \psi_t^c-(q+(1-\frac 1N) \eta_t^c)(\overline{\tilde{X}}_t-\tilde{X}^i_t)\\
\nonumber                                            &= & -(N-1) \psi_t^c-(q+(1-\frac 1N) \eta_t^c)(\frac 1N\tilde{Y}_t-\tilde{X}^i_t).
\ea
The equation for  $\tilde{Y}_t$  becomes
\[
 d\tilde{Y}_t =(N\gamma_t-(N-1)\psi_t ^c-(q+(1-\frac 1N) \eta_t^c)(\frac 1N\tilde{Y}_t-\tilde{X}^i_t)+\alpha^i_t)dt++2\sqrt{  \tilde{Y}_t}d\widetilde W_t.
\]

 Note that  $\tilde{X}^i_t\geq 0$ leads to $\tilde{Y}_t\geq \tilde{X}^j_t$ for $j=1,\cdots,N$ and $0\leq t \leq T$, since $\alpha^i$ is admissible implies $\tilde{X}^j_t\geq 0$ for all $j$ and for all  $0\leq t\leq T$ .  Applying It\^o fomula to $\tilde{Y}_t^2+ (\tilde{X}^i_t)^2$, we have
\ba
\nonumber d( \tilde{Y}_t^2+ (\tilde{X}^i_t)^2) &=& ( a(t) (\tilde{Y}_t)^2+ b(t) \tilde{Y}_t\tilde{X}^i_t + c(t) (\tilde{X}^i_t)^2 + (d(t)+ 2\alpha_t^i) \tilde{Y}_t+( e(t)+2\alpha_t^i )\tilde{X}^i_t) dt  \\
                                                                          &   & + 4\tilde{Y}_t\sqrt{\tilde{Y}_t} d\tilde{W}_t+ 4\tilde{X}_t^i\sqrt{\tilde{X}_t^i} dW^{i}_t,\label{Y+X}
\ea
where
\ba
\nonumber & & a(t)=-2\frac 1N (q+ (1-\frac 1N)\eta_t^c), \ b(t)= 2 (q+ (1-\frac 1N)\eta_t^c)+ 2a \frac 1N,\\
\nonumber &  & c(t)=-2a,\ d(t)=2(N\gamma_t-(N-1)\psi_t^c+ 2),\ e(t)=(2 \gamma_t +4).
\ea 
We take $\beta>0$ large enough such that 
\ba
\nonumber \beta\geq 2\sup_{0\leq t\leq T} \{ |a(t)|+ |b(t)| + |c(t)| + |d(t)|+|e(t)| +1\},
\ea
and then consider $e^{-\beta t} (\tilde{Y}_t^2+ (\tilde{X}^i_t)^2)$ written as
\ba
\nonumber e^{-\beta t} (\tilde{Y}_t^2+ (\tilde{X}^i_t)^2)&=&\{ -\beta( \tilde{Y}_t^2+ (\tilde{X}^i_t)^2)) + a(t) (\tilde{Y}_t)^2+ b(t) \tilde{Y}_t\tilde{X}^i_t + c(t) (\tilde{X}^i_t)^2 + (d(t)+ 2\alpha_t^i) \tilde{Y}_t\\
\nonumber                                                                       & & + ( e(t) +  2\alpha_t^i)) \tilde{X}^i_t\} e^{-\beta t} dt  + e^{-\beta t} (4\tilde{Y}_t\sqrt{\tilde{Y}_t} d\tilde{W}_t+ 4\tilde{X}_t^i\sqrt{\tilde{X}_t^i} dW^{i}_t)\\
\nonumber &\leq & \{ -\frac 12 \beta( \tilde{Y}_t^2+ (\tilde{X}^i_t)^2-1) + |\alpha_t^i|^2 \} e^{-\beta_t}dt+   e^{-\beta t} (4\tilde{Y}_t\sqrt{\tilde{Y}_t} d\tilde{W}_t+ 4\tilde{X}_t^i\sqrt{\tilde{X}_t^i} dW^{i}_t),
\ea
leading to
\ba
\nonumber& & e^{-\beta(t\wedge\theta_M)}(\tilde{Y}^2_{t\wedge\theta_M}+( \tilde{X}^i)^2_{t\wedge\theta_M})\\
\nonumber&\leq &(\tilde{Y}_0^2+ (\tilde{X}_0^i)^2) + \frac{\beta}{2} t + \int_0^{t\wedge\theta_M} e^{-\beta s} |\alpha_s^i|^2 ds -  \frac {\beta}{2} \int_0^{t\wedge\theta_M} e^{-\beta s}(\tilde{Y}_s^2+ (\tilde{X}^i_s)^2) ds\\
\nonumber & & +  \int_0^{t\wedge\theta_M}e^{-\beta t} (4\tilde{Y}_t\sqrt{\tilde{Y}_t} d\tilde{W}_t+ 4\tilde{X}_t^i\sqrt{\tilde{X}_t^i} dW^{i}_t).
\ea
Taking expectation on both sides gives 
\be
 e^{-\beta t} M^2\PP(\theta_M\leq t)\leq \tilde{Y}_0^2+(\tilde{X}^i_0)^2+ \frac{\beta}{2} t+\EE[\int_0^t |\alpha^i_s|^2 ds] -\frac{\beta}{2}\EE[ \int_0^{t\wedge\theta_M} e^{-\beta s}(\tilde{Y}_s^2+ (\tilde{X}^i_s)^2) ds ].
\en
Take $t=T$, as $  M\rightarrow \infty$, we have (\ref{nonexplosion}). The above relation also implies
\ba
\nonumber \frac{\beta}{2}\EE[ \int_0^{T\wedge\theta_M} e^{-\beta s}(\tilde{Y}_s^2+ (\tilde{X}^i_s)^2) ds ]\leq  \tilde{Y}_0^2+(\tilde{X}^i_0)^2+ \frac{\beta}{2} T + \ \EE[\int_0^T |\alpha^i_s|^2 ds] 
\ea
This is true for any $M$. By letting $M\rightarrow \infty$ and using (\ref{nonexplosion}), we obtain
\ba\label{inequality-last-step}
  \frac{\beta}{2}\EE[ \int_0^{T} e^{-\beta s}(\tilde{Y}_s^2+ (\tilde{X}^i_s)^2) ds ]\leq  \tilde{Y}_0^2+(\tilde{X}^i_0)^2+ \frac{\beta}{2} T +  \EE[\int_0^T |\alpha^i_s|^2 ds] .
\ea
We are ready to prove (\ref{ui}).  Applying Doob's martingale inequality and Cauchy-Schwarz inequality to (\ref{ver-Xit}), we get 
\ba
\nonumber\EE\sup_{t\leq s\leq T} |\tilde{X}^i_s|^2&\leq& 2\EE\left[\int_t^{T}\left(a(\frac{1}{N}|\tilde{Y}_s|+|\tilde{X}^i_s|)+\gamma_s\right)ds\right]^2\\
\nonumber&&+2\EE\left[\int_t^{T}|\alpha^i_s|ds\right]^2
+2\EE\left[\sup_{t\leq s\leq T}\left(\int_t^{T}2\sqrt{\tilde{X}^i_s}ds\right)^2\right]\\
\nonumber &\leq& C_1T\EE\int_t^{T}(|\tilde{Y}_s|^2+|\tilde{X}_s^i|^2+|\gamma_s|^2+|\alpha_s^i|^2)ds+C_2
\EE\left[\int_t^{T}\tilde{X}^i_sds\right]< \infty
\ea
where $C_1$ and $C_2$ are two positive constants. Here in the last step, we use (\ref{inequality-last-step}). This proves (\ref{ui}).

In the following, we want to show $\hat{\alpha}^i$ is an admissible strategy for the $i$-th bank and its value is given by $V^i(t, x)$. Hence  it is an optimal strategy.
In this case,  given $\hat\alpha^j$ for $j\neq i$, the monetary reserve for $i$-th bank is $\hat{X}^i_t$ which is governed by the equation
\ba
\nonumber d\hat{X}^i_t&=&\left((a+q+(1-\frac 1N)\eta_t^c)(\frac{1}{N}\hat{Y}_t-\hat{X}^i_t)+\gamma_t-\psi_t^c\right)dt+2\sqrt{\hat{X}^i_t}dW^i_t\\
&=&\left(\frac{a+q+(1-\frac 1N)\eta_t^c}{N}\sum_{j=1}^N(\hat X^j_t-\hat X^i_t)+\gamma_t-\psi_t^c\right)dt+2\sqrt{\hat{X}^i_t}dW^i_t,  
\ea
the total monetary reserve $\tilde{Y}_t$ is written as 
\[
 d\hat{Y}_t =N(\gamma_t-\psi_t^c)dt++2\sqrt{  \hat{Y}_t}d\widetilde W_t,
\]
where
$$
d \widetilde W_t=\frac{1}{\sqrt{  \hat{Y}_t}}\sum_{j=1}^N \sqrt{\hat{X}^j_t} dW_t^j
$$
is a Brownian motion. Under the conditions (\ref{regularity-cond-1}) and (\ref{regularity-cond-2}), according to the Proposition 2.1 in \cite{Fouque-Ichiba}, we have $\hat X^i_t\geq 0$ almost surely for $i=1,\cdots,N$. This implies the admissibility of the strategy $\hat\alpha$ and the regularity of the total monetary reserve $\hat{Y}_t$. Using the above argument ( for the estimates of  $\tilde{Y}_t, \tilde{X}^i_t$), we can show, for a suitable $\beta>0$,
$$
E[\int_0^T e^{-\beta s} \hat{Y}_s^2 ds]<\infty,
$$
by considering the Ito's formula for $\hat{Y}_s^2$ (see (\ref{Y+X})). Then we can also show,
$$
E[\sup_{0\leq  s\leq T} |\hat{X}^i_s|^2]<\infty.
$$
Then as (\ref{V(t,x)}), we can derive
$$
V^i(t, x)=E[\int_t^T (\frac{(\alpha_s^i)^2}{2} - q \hat{\alpha}^i_s (\bar{\hat{X}}_s- \hat{X}^i_s) + \frac{\epsilon}{2}  (\bar{\hat{X}}_s- \hat{X}^i_s)^2 ) ds + g(\hat{X}_T)].
$$
That is, $\hat{\alpha}^i$ is an optimal strategy for $i$-th bank when $\hat{\alpha}^j, j\neq i$ are used for $j$-th bank.
The proof is complete.

\end{proof}

{\theorem\label{Ver-Thm-inf}  
Given the optimal strategies $\hat\alpha^j$ written as (\ref{optimal-control-inf}) for all $j\neq i$, $V^i$ and $\hat\alpha^i$ given by (\ref{ansatz-inf})  and (\ref{optimal-control-inf}) respectively are the value function and the corresponding optimal strategy which is also the Markov Nash equilibrium  associated to the problem (\ref{value-inf}-\ref{coupled-inf}).  }
%suppose that $\nu\in C^{2}(\RR)$ is nonnegative satisfying the growth condition in Chapter III.9 of \cite{Fleming2006}
%\be\label{growth-inf}
%|\nu(x)|\leq K(1+|x|^m),
%\en
%for some constants $K$ and $m$ and solves the HJB equation 
%\ba\label{HJB-inf-1}
%\nonumber -r\nu&+&\inf_{\alpha^i}\bigg\{
%\sum_{j\neq i}\bigg(a(\overline{x}-x^j)+\gamma+{\hat\alpha^j(t,x)}\bigg)\partial_{x^j}V^i
%+ \bigg(a(\overline{x}-x^i)+\gamma+{\alpha^i }\bigg)\partial_{x^i}V^i\\
%&+&2\sum_{j=1}^N   x^j \partial_{x^jx^j}V^i+\frac{(\alpha^{i})^2}{2}-q\alpha^{i}\left(\overline{x}-x^i\right)+\frac{\epsilon}{2}(\o x-x^i)^2\bigg\} =0.
%\ea
%Then $\nu=V^i$ and 
%\ba
%\nonumber \hat\alpha^i &:=&\mathrm{argmin} \bigg\{
%\sum_{j\neq i}\bigg(a(\overline{x}-x^j)+\gamma+{\hat\alpha^j(t,x)}\bigg)\partial_{x^j}V^i
%+ \bigg(a(\overline{x}-x^i)+\gamma+{\alpha^i}\bigg)\partial_{x^i}V^i\\
%&+&2\sum_{j=1}^N   x^j \partial_{x^jx^j}V^i+\frac{(\alpha^{i})^2}{2}-q\alpha^{i}\left(\overline{x}-x^i\right)+\frac{\epsilon}{2}(\o x-x^i)^2\bigg\} \label{optimal alpha^i-inf},
%\ea
%is an optimal strategy.}
\begin{proof}
Adopting the notations in Theorem \ref{Ver-Thm}, denote $\tilde\alpha$ as an admissible strategy written as
$$
\tilde\alpha=(\hat\alpha^1,\cdots,\hat\alpha^{i-1},\alpha^i,\hat\alpha^{i+1},\cdots\hat\alpha^N).
$$ 
Similarly, when $\alpha_t$ is replaced by $\tilde \alpha_t$ in (\ref{coupled}), the solution is written as 
$$
\tilde{X}_t=(\tilde{X}_t^1, \tilde{X}_t^2, \cdots, \tilde{X}_t^N),
$$
and when $\alpha_t$ is replaced by $\hat \alpha_t$ in (\ref{coupled}), the solution is given by 
$$
\hat{X}_t=(\hat{X}_t^1, \hat{X}_t^2, \cdots, \hat{X}_t^N).
$$
We again need to verify that for any admissible strategy $\tilde\alpha$, the solution $V^i$ satisfies 
\be \label{upper-inf}
V^i(t,x)\leq \EE_{t,x}\left\{\int_0^\infty e^{-rt}f_i(\tilde{X}_t, \alpha^i_s)ds \right\},
\en
and for $\hat{\alpha}$.
\be \label{optim-inf}
V^i(t,x)= \EE_{t,x}\left\{\int_0^\infty e^{-rt}f_i(\hat{X}_t, \hat{\alpha}^i_s)ds \right\},
\en
The equations (\ref{upper-inf}) and (\ref{optim-inf}) lead to $\hat{\alpha}^i$ is an optimal strategy for $i$-th bank  given $\hat{\alpha}^j$ for $j\neq i$ as their strategies. For the proof of (\ref{upper-inf}), we can assume that
\be \label{condition-inf}
\EE_{t,x}\left\{\int_0^\infty e^{-rt}f_i(\tilde{X}_t, \alpha^i_s)ds\right\}<\infty,
\en
otherwise, (\ref{upper}) holds  automatically. Define the exit time 
\[
\theta_M=\inf\{t;\; |\tilde X_t| \geq M \},
\]
or $\theta_M=+\infty$ if $X_t$ never exit for all $t\geq 0$. Given the optimal strategies $\hat\alpha^j$ written as (\ref{optimal-control-inf}) for all $j\neq i$, using It\'o's formula on $e^{-rt}V^i$, we obtain   
\ba
\nonumber&&e^{-r{T\wedge\theta_M}}V^i(\tilde X_{T\wedge\theta_M})\\
\nonumber &=&V^i(x)+\int_0^{T\wedge\theta_M}e^{-rs}\bigg\{-rV^i(\tilde X_s)+\sum_{j\neq i} \bigg(a(\overline{\tilde X}_s-\tilde X^j_s)+\gamma+{\hat\alpha^j_s }\bigg)\partial_{x^j}V^i(\tilde X_s)\nonumber\\
\nonumber&&+\bigg(a(\overline{\tilde X}_s-\tilde X^i_s)+\gamma+\alpha^i_s\bigg)\partial_{x^i}V^i(\tilde X_s)+ 2\sum_{j=1}^N   \tilde X^j_s \partial_{x^jx^j}V^i(\tilde X_s)\bigg\} ds\\
&&+\int_t^{T\wedge\theta_M}e^{-rs}\sum_{i=1}^N2\sqrt{\tilde X^j_s}\partial_{x^j}V^i(\tilde X_s) dW^j_s.
\ea
Taking the expectation of both sides and using  
\ba
\nonumber &&  -rV^i+
\sum_{j\neq i}\bigg(a(\overline{x}-x^j)+\gamma+{\hat\alpha^j }\bigg)\partial_{x^j}V^i+ \bigg(a(\overline{x}-x^i)+\gamma+{\alpha^i}\bigg)\partial_{x^i}V^i\\
&&+2\sum_{j=1}^N   x^j \partial_{x^jx^j}V^i+\frac{(\alpha^{i})^2}{2}-q\alpha^{i}\left(\overline{x}-x^i\right)+\frac{\epsilon}{2}(\overline x-x^i)^2\geq 0
\ea
%\ba
%\nonumber \hat\alpha^i &:=&\mathrm{argmin} \bigg\{-rV^i+
%\sum_{j\neq i}\bigg(a(\overline{x}-x^j)+\gamma+{\hat\alpha^j }\bigg)\partial_{x^j}V^i+ \bigg(a(\overline{x}-x^i)+\gamma+{\alpha^i}\bigg)\partial_{x^i}V^i\\
%&&+2\sum_{j=1}^N   x^j \partial_{x^jx^j}V^i+\frac{(\alpha^{i})^2}{2}-q\alpha^{i}\left(\overline{x}-x^i\right)+\frac{\epsilon}{2}(\o x-x^i)^2\bigg\} \label{optimal alpha^i-inf},
%\ea
give 
\ban
V^i(x)&\leq&\EE\bigg\{\int_0^{T\wedge\theta_M}e^{-rs}\left(\frac{(\alpha^{i}_s)^2}{2}-q\alpha^{i}_s\left(\overline{\tilde X}_s-\tilde X^i_s\right)+\frac{\epsilon}{2}(\overline {\tilde X}_s-\tilde X^i_s)^2\right)ds\\
&&+e^{-r{T\wedge\theta_M}}V^i(\tilde X_{T\wedge\theta_M})\bigg\}.
\ean
Using the ansatz $V^i$ given by (\ref{ansatz-inf}) gives 
\[
V^i(\tilde X_{T\wedge\theta_M})\leq K(1+\sup_{0\leq t\leq T}|\tilde X_t|^2).
\]
Hence, according to the results in Theorem \ref{Ver-Thm},  when taking the limit as $M\rightarrow\infty$, we get the uniform integrability of $V^i$. 
\be\label{ineq-ver-inf}
V^i(x)\leq\EE\bigg\{\int_0^{T}e^{-rs}\left(\frac{(\alpha^{i}_s)^2}{2}-q\alpha^{i}_s\left(\overline{\tilde X}_s-\tilde X^i_s\right)+\frac{\epsilon}{2}(\overline {\tilde X}_s-\tilde X^i_s)^2\right)ds+e^{-r{T }}V^i(\tilde X_{T})\bigg\}.
\en
Referring to Chapter III.9 in \cite{Fleming2006},  using $\phi+L>0$, we obtain $V^i\geq 0$ satisfying  
\be\label{suff-cond-inf-1}
{\lim\sup}_{T\rightarrow\infty}e^{-rT}\EE V^i(\hat X_T)\geq0,
\en
leading to 
\[
V(x)\geq \EE_{x}\left\{\int_0^\infty e^{-rt}f_i(\hat X_s,   \alpha^i_s)ds\right\}.
\]
We now need to verify 
\be\label{suff-cond-inf-2}
{\lim\inf}_{T\rightarrow\infty}e^{-rT}\EE V^i(\tilde X_T)=0,
\en
implying
\[
V^i(x)\leq  \EE_{x}\left\{\int_0^\infty e^{-rt}f_i(\tilde X_s,   \alpha^i_s)ds\right\}.
\]
Recalling
$$
V^{i}(\tilde{X}_s)=\frac{\eta}{2}(\overline{\tilde{X}}_s-\tilde{X}^i_s)^2+ L(\overline{\tilde{X}}_s-\tilde{X}^i_s) + \phi \overline{\tilde{X}}_s +\mu,
$$
since 
$$
f_i(\tilde{X}_s, \alpha_s^i)\geq (\epsilon-q^2)(\overline{\tilde{X}}_s-\tilde{X}^i_s)^2,
$$
$$
f_i(\tilde{X}_s, \alpha_s^i)\geq \frac 12 (1-\frac{q^2}{\epsilon})\alpha_s^2,
$$
and 
$$
E[\int_0^{\infty} e^{-rs} f_i(\tilde{X}_s, \alpha_s^i) ds]<\infty,
$$
we have
$$
E[\int_0^T e^{-rs}  (\overline{\tilde{X}}_s-\tilde{X}^i_s)^2 ds]<\infty,
$$
and
$$
E[\int_0^T e^{-rs}  (\alpha_s)^2 ds]<\infty,
$$
implying
$$
\liminf _{T\rightarrow \infty} e^{-rT} E[ (\overline{\tilde{X}}_T-\tilde{X}^i_T)^2]=0.
$$
We need to show 
$$
\lim_{T\rightarrow \infty} e^{-rT} E[\overline{\tilde{X}}_T]=0.
$$
Since
$$
d\tilde{Y}_t=(N\gamma-(N-1) \psi- (q+(1-\frac1N)\eta) (\overline{\tilde{X}}_t-\tilde{X}^i_t)+ \alpha_t^i) dt + 2\sqrt{\tilde{Y}_t} d\tilde{W}_t,
$$
where $\tilde{W}$ is a Brownian motion (see the argument in Theorem 1), we have 
$$
E[\tilde{Y}_T]=E[\tilde{Y}_0]+ (N\gamma-(N-1) \psi)T -  (q+(1-\frac1N)\eta)\int_0^T E[(\overline{\tilde{X}}_s-\tilde{X}^i_s) ds]+ \int_0^T  E[\alpha_s^i]ds.
$$
Then 
$$
\begin{array}{l}
e^{-rT}E[ \int_{0}^T  |\overline{\tilde{X}}_s-\tilde{X}^i_s| ds]\\
=e^{-rT}E[ \int_{0}^T e^{\frac r2 s} e^{-\frac r2 s} |\overline{\tilde{X}}_s-\tilde{X}^i_s| ds]\\
\leq e^{-rT}(\int_{0}^Te^{rs} ds)^{1/2}( E[\int_{0}^T e^{-rs} |\overline{\tilde{X}}_s-\tilde{X}^i_s|^2 ds])^{1/2}\\
\leq \frac{1}{\sqrt{r}} e^{-\frac 12 rT}(E[ \int_{0}^T e^{-rs} |\overline{\tilde{X}}_s-\tilde{X}^i_s|^2 ds])^{1/2}\\
\rightarrow 0,\ T\rightarrow \infty.
\end{array}
$$
Similarly,
$$
e^{-rT}E[ \int_{0}^T | \alpha^i_s| ds] \rightarrow 0,\ T\rightarrow \infty.
$$
Since $\tilde{Y}_T=N \overline{\tilde{X}}_T$, putting the above results together, we can obtain
$$
\lim_{T\rightarrow \infty} e^{-rT} E[\overline{\tilde{X}}_T]=0.
$$ 
Hence, the above results gives (\ref{suff-cond-inf-2}). 

Finally, referring to Proposition 2.1 \cite{Fouque-Ichiba}, the conditions (\ref{regularity-cond-1-inf}) and (\ref{regularity-cond-2-inf}) give the regularity condition of the total monetary reserve (\ref{Y-inf}) and the admissibility of the optimal strategies $\hat\alpha^i$ for $i=1,\cdots,N$. This proves that $V^i$ is the value function and $\hat\alpha^i$ is an optimal strategy. 
\end{proof}

}

\bibliographystyle{plainnat}
%\bibliographystyle{plain}
%\nocite{*}
\bibliography{references-delay-games-feller}

\end{document}